%InHisName
 % \documentclass[journal, onecolumn]{IEEEtran}
 %\documentclass[journal]{IEEEtran}
\documentclass[12pt, draftclsnofoot, onecolumn]{IEEEtran}
\usepackage{times,amsmath,epsfig,latexsym,amssymb,psfrag,graphicx}
\usepackage{paralist}
\usepackage{booktabs}
\usepackage{color}
 \usepackage{setspace}
\usepackage{epstopdf}
\usepackage{subcaption}
\usepackage[ruled,vlined]{algorithm2e}
\usepackage{citesort}
\pagenumbering{gobble}

\newtheorem{proposition} {Proposition}
\newtheorem{definition} {Definition}

\newenvironment{proof}[1][Proof]{\begin{trivlist}
\item[\hskip \labelsep {\bfseries #1}]}{\end{trivlist}}

\newcommand{\qed}{\nobreak \ifvmode \relax \else
\ifdim\lastskip<1.5em \hskip-\lastskip
\hskip1.5em plus0em minus0.5em \fi \nobreak
\vrule height0.75em width0.5em depth0.25em\fi}
\pagenumbering{gobble}
 %\doublespacing
 %%%\usepackage[nomarkers,figuresonly]{endfloat}
\begin{document}

\title{$E^2$-MAC: Energy Efficient Medium Access  for Massive M2M Communications\footnote{Accepted in IEEE Transactions on communications. © 2016 IEEE. Personal use of this material is permitted. Permission from IEEE must be obtained for all other
uses, including reprinting/republishing this material for advertising or promotional purposes, collecting new
collected works for resale or redistribution to servers or lists, or reuse of any copyrighted component of this
work in other works.
}}
\author{Guowang Miao$^*$, Amin Azari$^*$,  and Taewon Hwang$^+$\\
$^*$KTH Royal Institute of Technology, $^+$Yonsei University\\
Email: \{guowang, aazari\}@kth.se, twhwang@yonsei.ac.kr}% <-this % stops a space
\maketitle

%\maketitle

\begin{abstract}

In this paper, we investigate energy-efficient clustering and medium access control (MAC)  for cellular-based M2M networks to minimize device energy consumption and prolong network battery lifetime. First, we present an accurate energy consumption model that considers both static and dynamic energy consumptions, and utilize this model to derive the network lifetime. Second, we find the cluster size to maximize the network lifetime and develop an energy-efficient cluster-head selection scheme. Furthermore, we find feasible regions where clustering is beneficial in enhancing  network lifetime. We further investigate communications protocols for both intra- and inter-cluster communications. While   inter-cluster communications use conventional cellular access schemes, we develop an energy-efficient and load-adaptive multiple access scheme, called $n$-phase CSMA/CA, which provides a tunable tradeoff between energy efficiency, delay, and spectral efficiency of the network.   The simulation results show that the proposed clustering, cluster-head selection, and communications protocol design outperform the others in energy saving and significantly prolong the lifetimes of both individual nodes and the whole M2M network.

\end{abstract}
\begin{IEEEkeywords}
Machine to Machine communications, Internet of Things, MAC, Energy efficiency, Lifetime, Delay.
\end{IEEEkeywords}

\IEEEpeerreviewmaketitle

 \section{Introduction}\label{intr}
\IEEEPARstart{I}{nternet} of Things (IoT) enables smart devices to participate more actively in everyday life, business, industry, and health care. Among large-scale applications, cheap and widely spread machine-to-machine (M2M) communications supported by cellular networks will be one of the most important enablers for the success of IoT \cite{itu}. M2M communications, also known as machine-type communications (MTC), means the communications of machine devices without human intervention \cite{ref2}. %MTC is applicable in health monitoring, smart metering, remote security, and so on \cite{int1}. 
The characteristics of MTC are:  small
packet payload, periodic or event-driven traffic, extremely high
node density, limited power supply, limited computational
capacity, and limited radio front-ends.
%The number of machine devices is expected to be nearly 50 billion by 2020, based on the estimation of Ericsson \cite{eric}. It is expected that 5G system design should be able to support up to one million simultaneous connections per square kilometer \cite{w_sam}. Therefore, saving even a single microjoule per device will result in a huge cost reduction and contribute greatly to  network sustainability. 
 Also, smart devices are usually battery-driven and long battery life is crucial for them, especially for devices in remote areas, as there would be a huge amount of maintenance effort if their battery lives are short. Based on the 5G envision from Nokia \cite{nok}, the bit-per-joule energy efficiency for cellular-based machine-type communications must be improved
by a factor of ten in order to provide 10 years of battery lifetimes.   

\subsection{Literature study} 
The lifetime issue in M2M networks is similar to that in wireless sensor networks (WSNs). In the following, we briefly introduce state-of-the-art medium access control (MAC) and clustering design for both wireless sensor networks and cellular networks. 

\subsubsection{MAC and clustering design in WSNs}\label{mw}
Wireless sensor networks play an important role in many industrial, monitoring, health-care, and military applications.  The evolution of MAC protocols for WSNs is investigated in \cite{evo}. The evolution of clustering algorithms for WSNs is investigated in \cite{afnc}, which classifies the available clustering algorithms  depending on cluster formation criteria and parameters used for cluster-head (CH) selection. 
%, which classifies available MAC protocols into four categories: (i) asynchronous MAC protocols in which each node chooses its active schedule autonomously, e.g. B-MAC, X-MAC, CMAC, and A-MAC; (ii) locally-synchronized MAC protocols that  group nodes into clusters in order to provide a common schedule for each cluster, e.g.  S-MAC, T-MAC, and RMAC; (iii) globally-synchronized MAC protocols which address collision and hidden node issues by assigning a node an unique time slot, e.g. TRAMA, Z-MAC, and TreeMAC; and (iv)  multi-channel MAC protocols which utilize a hybrid time- and frequency-division multiple access to address cross-channel communication deficiency, e.g. Y-MAC and MC-LMAC. 
Along with the proposed MAC and clustering protocols in literature, some standardization efforts have been done like IEEE 802.15.4 and WirelessHART. 
%For example, IEEE 802.15.4 specifies the physical and MAC layer for data communications using low-rate, low-power, and low-complexity short-range radio frequency transmissions in a wireless personal area network \cite{8021541}. 
MAC design for wireless sensors over cellular networks is investigated in recent years.  In \cite{cog}, sensor nodes form local area networks and communicate with data-gathering node(s) through gateways and base stations (BSs). In \cite{converge}, a model for WSN and LTE-advanced network convergence is proposed. 
The literature study shows that while energy efficiency has been a key factor in WSN design, an overly simplified energy consumption model has been used  in these  WSN research works which 
 usually assumes fixed energy consumption in each operating modes. This assumption no longer works in cellular networks as transmission energy may vary significantly to compensate path loss and is comparable or even much larger than  circuit energy consumption. Furthermore, direct application of WSN MAC designs in cellular-based M2M networks is either inefficient or impossible because: (i) cellular-based M2M networks have unique characteristics, e.g. massive concurrent access requests and diverse quality of service (QoS) requirements for machine nodes, which are quite different from WSNs; and (ii) the existence of BSs in cellular networks enables network assistance to improve device energy efficiency which is rarely considered in WSN literature. Then, the existing MAC and clustering protocols for WSNs fail to enable M2M communications in cellular networks \cite{scal2}.

\subsubsection{MAC design in cellular-based M2M networks}\label{rsec}
%The 3GPP LTE has defined some research projects to support  massive machine-type access in cellular networks \cite{r14}. 
Random access channel (RACH) of the LTE-Advanced is the typical way for machine nodes to access the base station \cite{andres}.  
 The capacity limits of RACH for serving M2M communications and a survey of improved alternatives are studied in \cite{laya}.
%Network congestion, including radio network congestion and signaling network congestion as defined in \cite{m2m2}, is likely to happen when a massive number of  devices want to use the RACH channel, then several solutions are proposed in \cite{llte} to reduce congestion in an overload condition.
  Among the alternatives, access class barring (ACB) is a promising approach which has attracted lots of attentions in literature \cite{acb1}. 
  % In this scheme, when a device has data to transmit, it decides to access the network with probability $q$ or defer its access with probability $1-q$. The corresponding access probability $q$ for each class of nodes is broadcasted by the BS. The BS can introduce one or several access classes depending on the granularity of the control needed over cellular-based M2M communications \cite{llte}. 
In \cite{scal2}, it is proposed to divide each communications frame into two periods: one for contention and the other for data transmission. 
The proposed schemes in \cite{scal2} and \cite{acb1} save energy by preventing collisions in data transmission. However, they require machine nodes to be active for a long time to gain channel access, which is not energy efficient.
A time-controlled access framework  satisfying the delay requirements of a massive M2M network is proposed in \cite{mas}, where the authors propose to divide machine nodes into  classes based on   QoS requirements and fixed access intervals are provided for each class. 
Power-efficient MAC protocols for machine devices with reliability constraints are considered in \cite{poshti}. The energy-efficient scheduling of machine devices in LTE networks together with  cellular users is 
investigated in \cite{coex}. While the energy-efficient solutions in \cite{poshti}-\cite{coex} are useful for  direct communications between machine devices and the BS, enabling large-scale M2M communications over cellular networks requires an energy efficient MAC protocol which  tackles also the massive concurrent access issues. The energy-efficient massive concurrent access control to the shared wireless medium is still an open problem for massive M2M communications  and is investigated in this work.
%Reducing LTE-device complexity for efficient machine-type access support over cellular networks is studied in \cite{8l}. 
\subsubsection{Clustering design in cellular-based M2M networks}
%According to the European Telecommunications Standard Institute (ETSI) architecture, M2M devices will be connected to the core network directly or by organizing themselves locally into clusters \cite{andres}. 
Feasibility of clustering for machine-type devices in cellular networks has been investigated in \cite{gsip} to address the massive access-request problem. In \cite{mc1}, given the initial set of CHs, each machine node is connected to its nearest cluster and in each cluster, the node with the lowest communication cost is selected as the CH. In \cite{zki}, the outage-optimized density of data collectors in a capillary network, where the machine  devices  and  data  collectors  are  randomly  deployed within  a  cell, is derived. An emerging communication paradigm in cellular networks is direct Device-to-Device (D2D) communications \cite{dzor}.  D2D communications motivates the idea to  aggregate and relay M2M traffic through D2D links \cite{d2dagg}.   Without an  installed  gateway, each machine node could act as a CH \cite{grlaya}. The study of clustered M2M communications with battery-limited nodes as the CHs  is absent in literature and is the focus of this paper. Also, the existence of BSs in cellular networks enables network assistance to further improve clustering performance, which has not been considered in literature and we will take this into account as well.
 %One must note that regarding the wide variety of M2M applications  and their diverse QoS requirements  proposing an ultimate MAC solution for all M2M networks is very complicated, if not impossible. Rather than proposing such an ultimate solution, our aim here is presenting a lifetime-aware framework that can be used in future research works for designing battery-friendly broadband cellular systems. 
\subsection{Open problems and Contributions}\label{ope}
As discussed above, there are promising MAC and clustering protocols in WSN literature and standardizations. However, considering the particular characteristics of cellular-based M2M communications, direct applications of these protocols in cellular-based M2M networks is either impossible or inefficient. Moreover, the energy consumption model in these works is overly simplified.  Addressing the numerous concurrent machine access within the current cellular network infrastructure  in an energy-efficient way is still an open problem and is the  focus of this paper. The main contributions of this paper include:

\begin{itemize}
\item
Present a lifetime-aware MAC design framework.
Use an accurate energy consumption model by taking both transmission and circuit energy consumptions into account. 

\item
Explore the impact of clustering on network lifetime and find the  cluster size to maximize network lifetime. Present a distributed cluster-head (re-)selection scheme.

\item
Explore the feasibility of clustering in different regions of the cell. 

\item
Propose a load-adaptive multiple access scheme, called $n$-phase CSMA/CA, which provides a tunable tradeoff between energy efficiency and delay by choosing $n$ properly.

\end{itemize}
 
%The nodes form local clusters and communicate through their cluster heads  with the BS. 

The remainder of this article is organized as follows: In the next section, the system model is introduced. In section III, the clustering design  is presented. The communications protocol design is presented in section IV. In section V, we present the simulation results. Concluding remarks are presented in  section VI.
 
\section {System Model}\label{sys}
Consider a single cell with one base station at the center and a massive number of static nodes which are randomly distributed according to a spatial Poisson point process of intensity $\sigma$. The average number of machine nodes in the cell is $N_t=\sigma \pi R_c^2$, where $R_c$ is the radius of the cell. The machine nodes are battery driven and long battery lifetimes are crucial for them. 
The remaining energy of the $i$th device at time $t_0$ is denoted by $E_i(t_0)$, the average time between two data transmissions by $T_i$, and the average packet size by $D_i$. 
The power consumption of node $i$ in the sleeping  and transmitting modes can be written  as  $P_s$ and $P_{t_i}+P_c$  respectively, where $P_c$ is the circuit power consumed by electronic circuits in the transmission mode and $P_{t_i}$ is the transmit power for reliable data transmission. As illustrated in  Fig. \ref{cycle}, a typical machine node may have different energy consumption levels in different activity modes: data gathering, synchronization, transmission, and sleeping.   The expected lifetime for node $i$ at time $t_0$ is  the average length of one duty cycle times the ratio between the remaining energy at time $t_0$ and the average energy consumption per duty cycle: 
\begin{align}\label{eq.L}
&L_i(t_0)= \frac{E_i(t_0)T_i}{E_s+P_s(T_i-\frac{D_i}{R_i}-T_a)+\frac{D_i}{R_i}(P_c+\xi P_{t_i})},
\end{align}
where $R_i$ is the average expected transmission rate for node $i$, $\xi$ is the inverse of power amplifier efficiency, and $E_s$ is the average energy consumption in each duty cycle for data gathering, 
synchronization, resource reservation, and etc. $T_a$ is the active mode duration for data processing other than transmission as represented in Fig. \ref{cycle}.
Let $\tilde{P}_{t_i}(R_i)=\xi P_{t_i}+\frac{R_i}{D_i}(E_s+P_s(T_i-T_a))$  and $\tilde{P}_c=P_c-P_s$, where ${\tilde{P}}_{t_i}(R_i)$ is strictly convex in $R_i$ if ${P}_{t_i}(R_i)$ is strictly convex. Now, one can rewrite \eqref{eq.L} as
\begin{align}
L_i(t_0)&=\frac{E_i(t_0)T_i}{D_i}\frac{R_i}{{\tilde{P}}_{t_i}+{\tilde{P}}_c} =\frac{E_i(t_0)T_i}{D_i}U_i(R_i),
\end{align}
where the energy efficiency  $U_i(R_i)$ is a strictly quasiconcave function of $R_i$ and one can find the optimal $R_i$ to maximize $U_i(R_i)$ \cite{gm}. Then, the lifetime is proportional to   ${U}_i(R_i)$ and the lifetime maximization  is equivalent to maximizing energy efficiency.  For a given system model where $E_i, T_i, T_a, D_i$, $P_c$, $P_s$, and $P_{t_i}$ are known, the  control parameter  is the average data rate in the uplink transmission $R_i$. The choice of multiple access scheme,  level of contention among nodes for channel access, and the amount of available resources for uplink transmission are the main parameters that determine the average expected data rate of a user, and hence, its expected battery lifetime. One must note that given the set of allocated resources to a node, the link-level energy efficiency can be maximized using the techniques in \cite{gm}, which are not the focus of this paper. In the following, we focus on  network-level energy efficiency. To this end, we will answer the following questions:

\begin{itemize}
\item
How should clusters be formed? 
\item
Which communications protocols should be used for intra-cluster communications, i.e. the communications inside the clusters, and
inter-cluster communications, i.e. the communications between the CHs and the BS? 
\end{itemize}
 
 \begin{figure}[!t]
\centering
\includegraphics[width=3.5in]{ 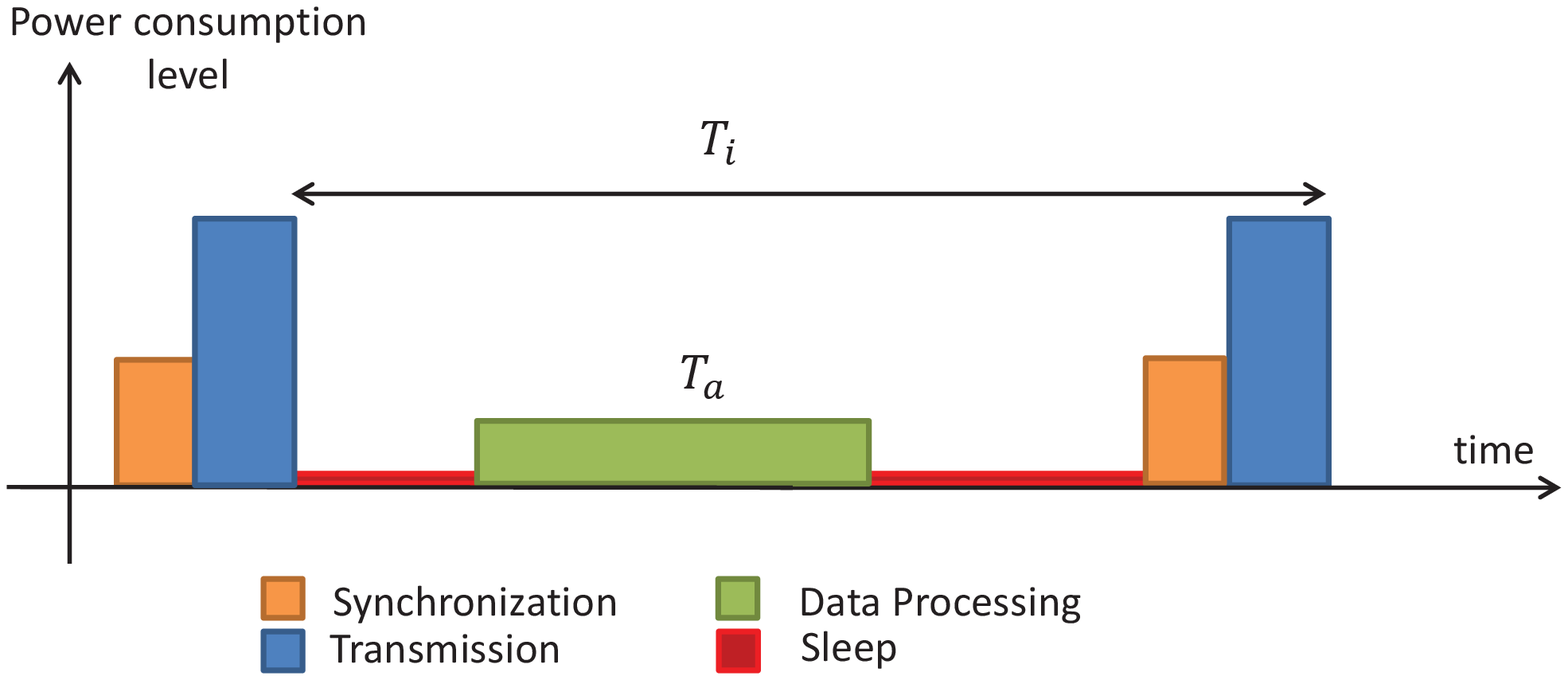}
\caption{Power consumption profile for node $i$. Different modes consume different power levels. } \label{cycle}
\end{figure}

Network lifetime can be defined as a function of individual lifetimes of all machine nodes. Here, we use the {\it {first energy drain}} (FED) network lifetime which is defined as the time at which the first node drains out of energy, and is applicable when missing even one node deteriorates the performance or coverage of the network.  The FED network lifetime is written as $$L_{net}=\min_{i} \hspace{2mm}L_i,$$ where $L_i$ is the lifetime of the $i$th device.    A network that is designed to maximize the FED network lifetime will also minimize the maintenance effort as the interval between  battery replacements in the network is also maximized if a battery is always replaced  once it is dead.

\section{How should clusters be formed?}
With clustering, the number of concurrent channel access requests can be reduced  and the lifetime of cluster members (CMs) can be extended because of less collisions and less transmission power.  However, the lifetime of a cluster head will decrease due to the energy consumption in listening to the channel and relaying packets from its CMs to the BS. Therefore, it is necessary to develop a clustering scheme to improve the overall network lifetime by considering the energy consumptions in both CM and CH nodes. 
 
The clustering problem consists of  finding the number of clusters, and   the CH in each cluster. 
Solving the joint problem is extremely  complicated, if not impossible.  Then, we follow a decoupled approach, define two subproblems, and solve the subproblems sequentially. To this end, in the next subsection we find the number of clusters that should exist in a cell. In subsection \ref{3Pa} we study the problem of finding the CH and the duration of being in the CH mode. 
%One will see that our proposed decoupled approach enables scalable and distributed clustering solutions for massive cellular-based M2M networks. 

%Extension to cluster forming at any time instant is provided in Remark \ref{rmk1}.
\subsection{Cluster size}\label{opsi}
Let $p$ denote the probability of being a cluster head for each device, there will be on average $N_tp$ cluster heads in the cell. 
Here, we try to find the probability of being a CH $p$, and hence, the corresponding average cluster-size $z=1/p$, which maximizes the FED network lifetime. 
 To keep the analysis tractable and obtain closed-form expressions, we consider a  homogeneous M2M network in which machine nodes have similar packet lengths and packet generation frequencies.  
 Also, we consider the cluster-forming problem at the reference time where $E_j(t_0)=E_0,\hspace{2mm} \forall j$.  Then, to achieve the highest FED lifetime in each cluster, machine nodes must change their turns in order to avoid that a single node has its energy drained. 
In each duty cycle of the cluster, a node may be in the CH mode with probability $\frac{1}{z}$ and in the CM mode with probability $1-\frac{1}{z}$. Then, the expected lifetime of each node in a cluster which is located at distance $d_h$ from the BS can be expressed as the length of the cluster duty cycle times the ratio between the remaining energy and the average energy consumption in each duty cycle, as follows:
\begin{equation}\label{lifi}
L_c(d_h,z)=\frac{E_0 }{\frac{1}{z} {\mathcal E}_h+(1-\frac{1}{z}){\mathcal E}_m}T_c,
\end{equation}
% where the $E_h$ and $E_m$ expressions were presented in \eqref{lhm}.
where the energy consumptions of each node in the CM and CH modes are written as: 
\begin{align}
&  {\mathcal E}_m={E_s+\tilde{D}\frac{P_c+\xi P_t^m}{R_m}},\quad \label{lmm} {\mathcal E}_h={E_s^h+ \frac{(z-1)\tilde{D}}{{R}_m}P_l+ [1+\lambda (z-1)] \tilde{D}\frac{P_c+\xi P_t^h}{R_h}},
\end{align}
respectively. In this expression,
 $\lambda$ is the packet-length compression coefficient at the CH and captures the packet compression effect at the CH which may decode and re-encode the packets of its members for more efficient data transmission. $\tilde{D}$ is the average packet size, $T_c$ is the cluster duty cycle, $P_l$ is the listening power consumption, $\frac{ (z-1)\tilde{D}}{R_m}P_l$ models the energy consumption in receiving packets from the CMs, and $E_s^h$ is the  average static energy consumption in the CH mode which is usually greater than $E_s$  due to the processing and compressing operations  on the received packets from the CMs. Assume the expected data rate function is $F_{\mathcal X}(w, P,\Omega (x),u)$, where  $w$ is the available bandwidth, $P$ the transmit power, $\Omega(x)$ the path loss as a function of distance $x$, $u$ the number of nodes which share the medium, and $\mathcal X$ the multiple access scheme. For example,  if frequency division multiple access (FDMA) and time division multiple access (TDMA) schemes are used, we have \cite{elza}:
\begin{align}
F_{FDMA}(w, P,\Omega (x),u)&=\frac{w}{u}\log(1+\frac{P}{N_0\Gamma \Omega(x)\frac{w}{u}}),\label{FF}\\
\text{and}\hspace{2mm} F_{TDMA}(w, P,\Omega (x),u)&=\frac{w}{u}\log(1+\frac{P}{N_0\Gamma \Omega(x)w}),\label{FT}
\end{align}
respectively, where $N_0$ is the noise power spectral density, and the additional loss term $\Gamma$ is introduced to account  for other losses associated with the specific scenario and the signal to noise ratio (SNR) gap between channel capacity and a practical coding and modulation scheme. Obviously, \eqref{FF} and \eqref{FT} are strictly convex and decreasing in $\Omega(x)$ and $u$, and strictly concave and increasing in $P$ and $w$. In the following, we assume $F_{\mathcal X}(w, P,\Omega (x),u)$ is strictly convex and decreasing in $\Omega(x)$ and $u$, and strictly concave and increasing in $P$ and $w$.  The expected data rates of the CHs and  CMs are found as:\begin{align}  
R_h =F_{\mathcal H}(w_h, P_t^h,\Omega_h (d_h),\frac{N_t}{z}),\quad 
R_m=F_{\mathcal  M}(w_m,P_t^m,\Omega_m ({{d}}_m),z),\nonumber
\end{align}
where $\mathcal  H$ and $\mathcal  M$ are the medium access schemes from the CH to the BS and from the CM to the CH respectively, $w_m$ and $w_h$ are the bandwidths for intra- and inter-cluster communications respectively, and $z$ and $\frac{N_t}{z}$ are the number of nodes which share the intra- and inter-cluster communications' resources respectively.  The inter- and intra-cluster communications path loss functions are modeled as  $\Omega_h (d_h)={\beta_h }({ d}_h)^{\gamma_h}$ and $\Omega_m (d_m)={\beta_m }({ d}_m)^{\gamma_m}$ where ${ d}_m$ is the average distance between CMs and the respective CHs, $d_h$ the average distance between CHs and the BS, $\beta_h$  and $\beta_m$ are constants, and $\gamma_h$ and $\gamma_m$ are path loss exponents.  

Recall that machine nodes are randomly distributed according to a spatial Poisson point process of intensity $\sigma$ in the cell. As each node independently decides to be a cluster-head with probability $p$, one can assume that CHs and CMs are distributed as  independent homogeneous spatial Poisson processes $P_1$ and $P_0$ with intensity parameter $\sigma_1=p \sigma$ and $\sigma_0=(1-p) \sigma$ \cite{14Foss}.
Each non-CH device joins the cluster of its closest CH, then a Voronoi tessellation is formed  in the cell \cite{14Foss} and the cell area is divided into zones called Voronoi cells where each Voronoi cell has a nucleus, i.e. a $P_1$ process which shows the CH. The average number of CMs in each cluster, $\tilde M$, represents the average number of $P_0$ process points in each Voronoi cell and the total length of all segments which connect the $P_0$ process points to the nucleus in a Voronoi cell is denoted by $\tilde J$. Based on the derivations in \cite{sima}, the $\tilde M$ and $\tilde J$ are derived as  $\tilde M= \frac{1-p}{p}$, and $ \tilde J = \frac{1-p}{2p^{\frac{3}{2}}\sqrt{\sigma}}$, respectively. Now, the average distance between a cluster member and its respective cluster head is derived as
\begin{equation} \label{eq:EX3}
{ d}_{m}=  {\tilde J}/{\tilde M}={1}/({2 \sqrt{\sigma p}}) = \sqrt{\frac{z}{4\sigma}}.
\end{equation}
Now, one can rewrite  the lifetime expression in \eqref{lifi} for $\lambda=1$ as  follows:
\begin{align}
&L_c (d_h,z)=\label{FEDl} \frac{E_0 T_c}{E_s\text{+}\frac{E_s^h-E_s}{z}\text{+} \frac{(z-1)\tilde D (P_l+P_c+\xi P_t^m)}{z F_{\mathcal M}(w_m,P_t^m,\Omega_m (\sqrt{\frac{z}{4\sigma}}),z)}+\frac{\tilde D (P_c+\xi P_t^h)}{F_{\mathcal H}(w_h, P_t^h,\Omega_h (d_h),\frac{N_t}{z})}}.
\end{align}
 Then,  the  cluster-size that maximizes \eqref{FEDl} is found as:
\begin{equation}\label{opz}
z^*=\frac{1}{p^*}=\arg \max_z\hspace{1.5mm} \min_{d_h}\hspace{1.5mm} L_c(d_h,z), 
\end{equation}
which maximizes the minimum cluster lifetime in the network. As the minimum cluster-lifetime happens in the cell edge, i.e. $d_h= R_c$, the optimization problem  in \eqref{opz} reduces to:
\begin{equation}\label{opz1}
z^*=\frac{1}{p^*}=\arg \max_z\hspace{1.5mm} L_c(R_c,z).
\end{equation}
For example, when $\mathcal X=\mathcal Y= \text{FDMA}$,  \eqref{FEDl} reduces to:
\begin{align}
L_c& (d_h,z)=\label{FEDl0}\frac{E_0 T_c}{E_s+\frac{E_s^h-E_s}{z}+\frac{\tilde D (z-1)(P_c+\xi P_t^m+P_l)}{w_m \log(1+A_1 z^{(1-\frac{\gamma_m}{2})})}+\frac{N_t \tilde D (P_c+\xi P_t^h)}{z w_h\log (1+A_2/ z)}}, 
\end{align}
 in which $A_1=\frac{P_t^m(4\sigma)^{\frac{\gamma_m}{2}}}{\Gamma N_0 w_m \beta_m}$ and $A_2=\frac{P_t^hN_t }{\Gamma N_0 w_h \beta_h( d_h)^{\gamma_h}}$.
One sees maximizing $L_c(d_h,z)$ in \eqref{FEDl0} is equivalent to minimizing  its denominator. Also by taking the second derivative of  the denominator of $L_c$ in \eqref{FEDl0} with respect to $z$, one can see that it is a strictly convex function over $z>0$ and $2\le \gamma_m\le 4$, which are typical for intra-cluster communications. Then, using the convex optimization tools, the proposed cluster size in \eqref{opz1} can be found. The $z^*$ in \eqref{opz1} is the desired cluster size at the reference time when all CMs inside a cluster have the same remaining energy levels, i.e. $E_0$ in \eqref{FEDl}. In subsection \ref{3Pa}, we will present a CH reselection scheme that balances the energy consumptions of all CMs so that  their remaining energy levels are as close to each other as possible. Then, we can use \eqref{FEDl} to estimate the desired cluster size at any time instant by replacing $E_0$ with the respective remaining energy level.
%The proposed framework  in the above can be extended to any cluster-based MAC approach in order  to derive the cluster size that maximizes the FED lifetime by finding the corresponding data rate functions $R_m$ and $R_h$.  

\subsection{Cluster-head (re)selection for FED maximization} \label{3Pa} 
After deriving the probability of being a CH, the BS broadcasts $p^*$ to all machine nodes in the cell. Then, $N_tp^*$ of them  broadcast themselves as the initial CHs and the remaining nodes  are connected to the nearest CH. In order to maximize the FED lifetime in each cluster, the existing CH in each cluster can gather position information and communication characteristics of its respective CMs and finds a new CH for its respective cluster. This information can be sent in regular intervals or on demand along with the ordinary data from the CMs to their respective CHs.  Equivalently, the existing set of CHs can send the gathered information to the BS and let the BS to derive the new set of CHs.

Define the set of machine nodes which are grouped in a given cluster as $\Psi $, and  the duty cycle of the cluster  as $T_c$. Recall the lifetime expression for the $i$th machine node at time $t_0$ from \eqref{eq.L}. Our aim here is to select a CH at time $t_0$ to maximize the minimum individual lifetime of the clustered nodes. Define the index of the selected CH as $i^*(t_0)$. The selected node must satisfy the following condition:
\begin{align}
&L_{net}(\text{using}\hspace{2mm} i^*)\ge L_{net}(\text{using any}\hspace{1.5mm} j\in \Psi)
\longrightarrow \min_{i\in \Psi} \frac{E_i(t_0)T_c}{\mathcal E_{i,i^*}}\ge \min_{i,j\in \Psi} \frac{E_i(t_0)T_c}{\mathcal E_{i,j}},\label{eqop}
\end{align}
where $\mathcal E_{i,k}$ is the expected energy consumption of node $i$ in each duty cycle of operation, defined as follows:
\begin{align}
\mathcal E_{i,k}=\left\{ \begin{array}{l}
  {E_s+D_i {(P_c+\xi P_t^m)}/{R_m^{i,k}}} \hspace{24 mm}\text{if $i$   $\ne k,$}\\
 {E_s^h+ \frac{\psi\tilde{D}}{ R_m^{\upsilon ,i}}P_l+ [1+\lambda \psi] \tilde{D}\frac{P_c+\xi P_t^h}{R_h^{i,b}}}\hspace{13mm}\text{if $i$ $=  k$,}
\end{array} \right.\nonumber
\end{align}
$k$ is the respective CH of node $i$, $\psi=|\Psi|-1$  is the number of CMs in $\Psi$, $R_m^{i,k}$ the average data rate between node $i$ and node $k$, $R_h^{i,b}$ the average data rate between node $i$ and the BS, and $R_m^{\upsilon,k}$ the average intra-cluster communications data rate. The data rate functions are found as:
\begin{align}
R_m^{i,k}=&F_{\mathcal M}(w_m,P_t^m,\Omega_m(d_{i,k}),z),
R_h^{i,b}=F_{\mathcal H}(w_h,P_t^h,\Omega_h(d_{i,b}),\frac{N_t}{z}),\nonumber\\
R_m^{\upsilon,k}=&F_{\mathcal M}(w_m,P_t^m,\Omega_m(d_{\upsilon,k}),z).\nonumber
\end{align}
In these expressions, $d_{i,k}$ is the distance between node $i$ and node $k$, $d_{i,b}$ is the distance between node $i$ and the BS, and $d_{\upsilon,k}$ is the average distance from an arbitrary point in the cluster to node $k$. Based on the cluster shape, one can use the average distance results in \cite{sadr} to find the appropriate estimate of $d_{\upsilon,k}$. For example, if the cluster shape can be approximated by a circle with radius $R$, the average distance to  node $k$ which is located at distance $r$ from the cluster center  is given by 
\begin{equation}\label{eqas1}
d_{\upsilon,k}\simeq\frac{2}{3}R+\frac{r^2}{2R}-\frac{r^4}{32R^3}.\end{equation}
Now, we need to estimate $R$ for a given density of nodes and cluster size.   
Define $R_{seg}$ as a random variable to represent the length of the segment from a randomly selected point inside a circle to the center of the circle, where the circle is located at $(0,0)$, and has a radius of $R_{circ}$. The expected value of $R_{seg}$ is derived as:
\begin{equation}\label{eq:j}
{\bar R}_{seg} = \int_{x}\int_{y} \sqrt{x^2+y^2} \frac{1}{\pi R_{circ}^2} \text{d}x\text{d}y = \frac{2}{3} R_{circ},
\end{equation}
where $(x,y)$ shows the position of the selected point with regard to the origin. Recall from \eqref{eq:EX3}, where we have derived the average distance between a CM and its initial CH, which is located at the cluster center as $d_m=\sqrt{{z}/{4\sigma}},$ in which $z$ and $\sigma$ show the cluster size and density of nodes, respectively. Then, if one estimates the shape of constructed clusters inside a cell with circle, the average radius of the constructed clusters can be estimated by combining \eqref{eq:EX3} and \eqref{eq:j},  as follows: 
\begin{equation}\label{eq_rad}
R=\frac{3}{2}d_m=\frac{3}{2}\sqrt{\frac{z}{4\sigma}}.
\end{equation}
The derived $R$ in \eqref{eq_rad} can be employed subsequently in \eqref{eqas1} in order to derive an approximation of $d_{\upsilon,k}$. 
%
%Fig. \ref{figsa} evaluates the tightness of the provided approximation for $d_{\upsilon,k}$ in \eqref{eqas1}. In this figure, the dashed and solid lines show the results of simulation and analytical analyses respectively. One sees that the provided approximation in \eqref{eqas1} matches well with the simulation results.
%
%
%
%\begin{figure}[!t]
%\centering
%\includegraphics[width=3.5 in]{ comp_sim_ana.eps}
%\caption{Evaluation of \eqref{eqas1} for estimating the average distance from CMs to  a node which is located at distance $r$ from the cluster center. Cell radius = 500 m,  and total number of nodes = 5000.  } \label{figsa}
%\end{figure}
%
%
%
In light of the above derivations, one can find the index of the desired CH as:
\begin{equation}\label{mm}
i^*(t_0)=\arg \max_{i\in \Psi} \big( \min_{j\in \Psi}\hspace{1.5mm} \frac{E_j(t_0)T_c}{\mathcal E_{j,i}} \big).
\end{equation} 
From  \eqref{eqop} and  \eqref{mm}, one sees that the choice of the CH is dependent upon: (i) the remaining energy of devices, and hence, it is time-dependent; (ii) the distance between machine devices; (iii) the distance between each device and  the BS; and (iv) the average length of the queued data at each device. If adjacent triggers for CH reselection are too closely placed, then it may result in energy wasting as no change in the CH selection is needed in multiple consecutive periods. If adjacent triggers are too far apart, then  negative impact on the network lifetime is possible as a  previously selected CH might be non-optimal in some periods.

\begin{proposition}\label{p2}
The expected CH duration for CH $i^*(t_0)$  is $KT_c$, where $K$ is the smallest non-negative integer that satisfies the following condition for any $j\in\Psi$:
\begin{align}
&\frac{E_{m(i^*)}(t_0) -K \mathcal E_{m(i^*),i^*} }{\mathcal E_{m(i^*),i^*} }  < \frac{E_{m(j)}(t_0) -K \mathcal E_{m(j),i^*} }{\mathcal E_{m(j),j} } ,\label{cone}
\end{align}
and $m(i)$ is the index of node with the shortest expected lifetime when $i$ is the CH, as follows:\begin{equation}\label{mi}m(i)=\arg\min_{j\in \Psi}  \frac{E_j(t_0)T_c}{\mathcal E_{j,i}}.\end{equation}

\end{proposition}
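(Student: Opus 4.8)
The plan is to read the proposition as a statement about a greedy, lifetime-optimal re-selection rule and then pin down the exact cycle count at which that rule fires. First I would fix the object being optimized: at any instant the cluster's first-energy-drain lifetime equals $T_c\min_{j\in\Psi} E_j/\mathcal E_{j,\mathrm{CH}}$, and by \eqref{mm}--\eqref{mi} the head $i^*(t_0)$ is precisely the one maximizing this max--min value at $t_0$. The design principle stated just before the proposition---re-select neither too early nor too late---then translates into the rule: keep $i^*$ for as many whole duty cycles as it remains the max--min optimal head, and trigger a fresh selection at the first cycle at which some other node would give a strictly larger cluster lifetime. So I would define $K$ as that first cycle and show it is characterized by \eqref{cone}.

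Next I would do the energy bookkeeping. Over $K$ whole duty cycles with $i^*$ as head, the per-cycle consumptions in the definition of $\mathcal E_{i,k}$ (the $i=k$ branch for $i^*$, the $i\neq k$ branch for everybody else) leave node $n$ with residual energy $E_n(t_0)-K\mathcal E_{n,i^*}$. The key observation is that if $i^*$ keeps serving, the residual lifetime of node $n$ measured in duty cycles is $E_n(t_0)/\mathcal E_{n,i^*}-K$, i.e. its $t_0$ value shifted by the \emph{same} constant $K$ for every node. Hence the $\arg\min$ over $\Psi$---the bottleneck node---does not move and stays equal to $m(i^*)$ from \eqref{mi}, and the continuing max--min residual lifetime (in cycles) is exactly the left-hand side of \eqref{cone}.

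Then I would compare against switching. If head $j$ is installed at $t_0+KT_c$, node $n$ drains at rate $\mathcal E_{n,j}$ from its residual $E_n(t_0)-K\mathcal E_{n,i^*}$, so the cluster residual lifetime becomes $\min_n (E_n(t_0)-K\mathcal E_{n,i^*})/\mathcal E_{n,j}$; evaluating this at the node $m(j)$ that is critical under head $j$ produces the right-hand side of \eqref{cone}. Thus \eqref{cone} is exactly the assertion that continuing with $i^*$ is strictly worse than switching to $j$. By the optimality of $i^*$ at $t_0$ the two sides satisfy LHS$\,\ge\,$RHS at $K=0$; as $K$ increases the LHS decreases with unit slope while the RHS decreases with the smaller slope $\mathcal E_{m(j),i^*}/\mathcal E_{m(j),j}$, which is below $1$ precisely when $m(j)$ is cheaper to run under $i^*$ than under $j$. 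For such $j$ the inequality must flip at a finite $K$, and taking the smallest integer $K$ for which \eqref{cone} holds for some $j\in\Psi$ gives the first re-selection instant, hence the claimed duration $KT_c$.

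The hard part will be the bottleneck identification in the switching case: the node minimizing $(E_n(t_0)-K\mathcal E_{n,i^*})/\mathcal E_{n,j}$ need not equal $m(j)$, since the $K$-cycle drain under $i^*$ perturbs the numerators non-uniformly, so the right-hand side of \eqref{cone} is in general an upper estimate of the true switched lifetime rather than its exact value. I would close this gap by noting that at $K=0$ the bottleneck under $j$ is exactly $m(j)$ and that, because we take the earliest crossing, $K$ is small enough that the critical node is unchanged in the regime of interest; absent an exact argument, the $m(j)$-based expression is in any case the quantity the scheme actually evaluates, so adopting it is consistent with the stated rule. A secondary, routine point is integrality: duty cycles are discrete, so ``smallest non-negative integer'' is the correct rounding of the continuous crossing time, and one should verify that $K$ is finite exactly for the $j$ with $\mathcal E_{m(j),i^*}<\mathcal E_{m(j),j}$.
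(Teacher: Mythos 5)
Your proposal follows essentially the same route as the paper's proof: establish the base case $K=0$ from the optimality condition \eqref{eqop} at $t_0$, track the linear drain $E_i(t_0)-\kappa\mathcal{E}_{i,i^*}$ over whole duty cycles while $i^*$ serves, and take $K$ to be the first cycle at which some $j\in\Psi$ makes \eqref{cone} hold. Your refinements --- noting that continuing with $i^*$ shifts every node's residual lifetime (in cycles) by the same constant $K$, so the bottleneck $m(i^*)$ provably cannot move, and the slope comparison showing the crossing is finite exactly for those $j$ with $\mathcal{E}_{m(j),i^*}<\mathcal{E}_{m(j),j}$ --- are absent from the paper and only strengthen the argument. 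The ``hard part'' you flag is genuine, but it is a gap in the paper's own proof as well, not a defect unique to your attempt: the paper likewise evaluates the post-switch lifetime at the $t_0$-bottleneck $m(j)$, even though after $K$ cycles of non-uniform drain the node minimizing $\left(E_n(t_0)-K\mathcal{E}_{n,i^*}\right)/\mathcal{E}_{n,j}$ need not be $m(j)$, so the right-hand side of \eqref{cone} is only an upper bound on the true cluster lifetime under head $j$; the paper's closing assertion that node $j$ ``will be the desired CH beyond $t_0+KT_c$'' carries exactly the imprecision you identified. Your reading --- that the $m(j)$-based quantity is the criterion the scheme actually evaluates, so \eqref{cone} defines the re-selection rule rather than certifying its global optimality --- is the honest way to state what is actually proved.
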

\begin{proof}
As $i^*$ is the selected CH at $t_0$, it satisfies the necessary condition in \eqref{eqop} which can be rewritten  as follows:
\begin{align}
\min_{i\in \Psi}  \frac{E_i(t_0)T_c}{\mathcal E_{i,i^*}}\ge &  \min_{i, j \in \Psi}  \frac{E_i(t_0)T_c}{\mathcal E_{i,j}}\longrightarrow
\frac{E_{m(i^*)}(t_0)  }{\mathcal E_{m(i^*),i^*} }  \ge  \frac{E_{m(j)}(t_0) }{\mathcal E_{m(j),j} } \label{eqre}.
\end{align}
\eqref{eqre} shows that Proposition \ref{p2} is true for $K=0$. If  $i^*$ is the respective CH of node  $i$ in time interval $[t_0,t_0+\kappa T_c]$, the expected remaining energy of node $i$ at time $t_0+\kappa T_c$ is $E_i(t_0) -\kappa \mathcal E_{i,i^*}.$ Then,  $i^*$ is the desired CH at $t_0+(K-1)T_c$ since
\begin{align}&\frac{E_{m(i^*)}(t_0) -\kappa \mathcal E_{m(i^*),i^*} }{\mathcal E_{m(i^*),i^*} }\ge \frac{E_{m(j)}(t_0) -\kappa \mathcal E_{m(j),i^*} }{\mathcal E_{m(j),j} }, \hspace{10mm} \forall j\in \Psi, \forall \kappa\in\{0,\cdots,K-1\}.\label{cone0}
\end{align}
At time $KT_c$, there exists a $j\in \Psi$ such that:
\begin{align}
&\frac{E_{m(i^*)}(t_0) -K \mathcal E_{m(i^*),i^*} }{\mathcal E_{m(i^*),i^*} }< \frac{E_{m(j)}(t_0) -K \mathcal E_{m(j),i^*} }{\mathcal E_{m(j),j} }.\nonumber
\end{align}
Then, node $j$ will  be the desired CH beyond $t_0+KT_c$, and hence, we have Proposition \ref{p2}.
\end{proof}
 In practice, frequent CH reselections may introduce high signaling overhead. Less frequent CH reselections can be used instead with some performance losses. Fig. \ref{fedli} presents the cumulative distribution function (CDF) of  individual lifetimes of a group of 10 clustered machine nodes   for different CH reselection periods. One sees that by applying the proposed CH selection scheme in \eqref{mm} {\it fast enough} so that the CH will be reselected whenever the result of \eqref{mm} is changed, the minimum individual lifetime is maximized and all nodes will die almost at the same time. Then, their batteries can be replaced at the same time, thus minimizing the human interventions and the efforts of maintaining the network.

 \begin{definition}\label{de121}
A feasible selection of the CH is {\it max-min fair } if an increase in the individual lifetime of any node must be at the cost of a decrease of some
already smaller lifetime \cite[chapter~4]{gzor}.
\end{definition}

\begin{proposition}
By applying the proposed CH selection scheme in \eqref{mm} fast enough, the max min fairness of the lifetimes of all CMs can be maintained. \end{proposition}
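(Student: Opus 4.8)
The plan is to show that the greedy max--min rule \eqref{mm}, when re-evaluated fast enough, produces a lifetime vector meeting Definition \ref{de121}. First I would pin down what ``fast enough'' provides: by Proposition \ref{p2} the CH is re-elected exactly when the argument of \eqref{mm} changes, so at every duty cycle the current residual energies are used to re-solve $\max_{i}\min_{j} E_j(t_0)T_c/\mathcal E_{j,i}$, and therefore at every instant the serving CH maximizes the minimum residual lifetime. I would then pass to the continuous-reselection (fluid) limit, in which the CH role is time-shared with fractions $\{f_k\}$, $\sum_k f_k=1$, $f_k\ge 0$, giving node $i$ an average per-cycle drain $\bar{\mathcal E}_i=f_i\mathcal E_{i,i}+\sum_{k\ne i} f_k\mathcal E_{i,k}$ and a lifetime $L_i=E_i(t_0)T_c/\bar{\mathcal E}_i$.

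Next I would invoke the behaviour established in the discussion after Proposition \ref{p2} and in Fig.~\ref{fedli}: applying \eqref{mm} fast enough equalizes the residual lifetimes of the contending nodes, driving them to a common value $L^\ast$ at which they expire together. The structural fact I would isolate here is a \emph{conservation of the CH burden}: in each duty cycle exactly one node is the CH, and by \eqref{lmm} serving as CH is strictly more expensive than serving as a CM, so the extra CH energy cannot be eliminated from the cluster --- it can only be reassigned among the nodes through the shares $\{f_k\}$ subject to $\sum_k f_k=1$.

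I would then verify Definition \ref{de121} by contradiction. Suppose a feasible re-allocation raised some $L_i$ above $L^\ast$ while leaving every node whose lifetime is at most $L_i$ unharmed. Raising $L_i$ means lowering $\bar{\mathcal E}_i$, whose dominant lever is node $i$'s own CH share $f_i$ because $\mathcal E_{i,i}$ strictly exceeds its CM cost; lowering $f_i$ forces $\sum_{k\ne i} f_k$ to rise, so some $f_j$ with $j\ne i$ increases, raising $\bar{\mathcal E}_j$ and pushing $L_j$ strictly below $L^\ast$. Since in the equalized outcome $L_j=L^\ast=L_i$, the penalized node's lifetime was not larger than $L_i$, contradicting the supposition. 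Hence no lifetime can grow except at the expense of an equal-or-smaller one, which is exactly the condition in Definition \ref{de121}.

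The step I expect to be the main obstacle is this last tradeoff: showing rigorously that \emph{every} feasible perturbation increasing one lifetime must strictly decrease some equal-or-smaller lifetime. The difficulty is that node $i$'s CM cost $\mathcal E_{i,k}$ itself depends on which node is CH, so reducing $f_i$ also perturbs the cross terms; handling this cleanly requires characterizing the feasible set of shares $\{f_k\}$ and proving the equalized point is the unique lexicographic max--min optimum rather than merely a maximizer of the single smallest component. I would also need to reconcile the idealized continuous time-sharing with the integer CH durations $KT_c$ of Proposition \ref{p2}, where the lifetimes are only equalized asymptotically as the reselection rate grows, which is precisely the ``fast enough'' qualifier in the statement.
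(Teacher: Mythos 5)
Your proposal diverges substantially from the paper's argument, and the route you take has genuine gaps. The paper's proof is a direct, discrete argument requiring no fluid relaxation: at the instant of selection, $i^*$ in \eqref{mm} maximizes $\min_{j\in\Psi} E_j(t_0)T_c/\mathcal E_{j,i}$, so replacing $i^*$ by any other CH in order to raise some node's lifetime necessarily drives the resulting bottleneck node (the minimizer in \eqref{mi}) down to a lifetime no larger than the minimum achieved under $i^*$; that is, the decrease hits an already-smaller lifetime, which is exactly the condition of Definition \ref{de121}. Reselecting whenever the argmax of \eqref{mm} changes then maintains this property over time, with Proposition \ref{p2} bounding each CH's tenure. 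Your proof instead detours through a time-sharing (fluid) model and rests on the premise that fast reselection drives all lifetimes to a common value $L^\ast$. That premise is never proven in your argument --- you import it from the discussion around Fig. \ref{fedli} and Fig. \ref{cfed}, which is simulation evidence, not a lemma --- and it is false in general: as the paper itself notes immediately after this proposition, a node with sufficiently low initial energy dies earlier than the others even if it never serves as CH, so the max-min fair outcome need not be equalized. Since your contradiction step explicitly invokes $L_j = L^\ast = L_i$ for the penalized node, the argument collapses precisely in those non-equalized instances, where max-min fairness must still hold.

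Moreover, the two obstacles you flag at the end --- that lowering $\bar{\mathcal E}_i$ need not go through $f_i$ because the cross terms $\mathcal E_{i,k}$ depend on which node is CH, and that the continuous shares $\{f_k\}$ must be reconciled with the integer CH durations $KT_c$ of Proposition \ref{p2} --- are not side issues; in your formulation they are exactly the steps carrying the burden of the proof, and they are left open. The paper's approach avoids both entirely because Definition \ref{de121} is a statement about feasible CH selections, not about time-shared allocations: max-min optimality of the discrete choice \eqref{mm} at each reselection instant already yields the fairness property, with no convergence claim, no uniqueness or lexicographic-optimality claim, and no limiting argument needed.
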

\begin{proof}
From \eqref{mm}, one sees that the selection of CH $i^*$ achieves the max-min individual lifetime.  Denote  node with the shortest expected lifetime when $i^*$ is the CH as the bottleneck node, where its index can be found from \eqref{mi} as $m(i^*)$. Then, if we select any node other than $i^*$ as the CH to increase the lifetime of a given node, the expected lifetime of of the bottleneck decreases, and hence, the selected CH in \eqref{mm} satisfies the max-min fairness requirement in Definition \ref{de121} for a limited CH duration as discussed in Proposition \ref{p2}. Then, if we reselect the CH fast enough, i.e. whenever the result of \eqref{mm} changes, the max min fairness of the lifetimes of all CMs can always be maintained. 
\end{proof}
 
By maintaining the max min fairness of the CMs' lifetimes, machine nodes will either have the same lifetime  or die earlier because of limited energy storage at the beginning.  The latter case happens when a machine node has a very low  initial remaining energy level and it dies earlier than the others even if never serves as the CH\footnote{The interested reader may refer to section 4.2.4 in \cite{gzor} for more information.}.  Quantitative analysis for the former case, where all CMs have the same initial remaining energy levels, is presented in Fig. \ref{cfed}. One sees that by successive CH reselections the minimum expected lifetime is increased and the maximum expected lifetime is decreased, and hence, the  difference which is depicted by a red-colored curve converges to zero.

\begin{figure}
        \begin{subfigure}[b]{0.5\textwidth}
                \includegraphics[width=3.3in]{ 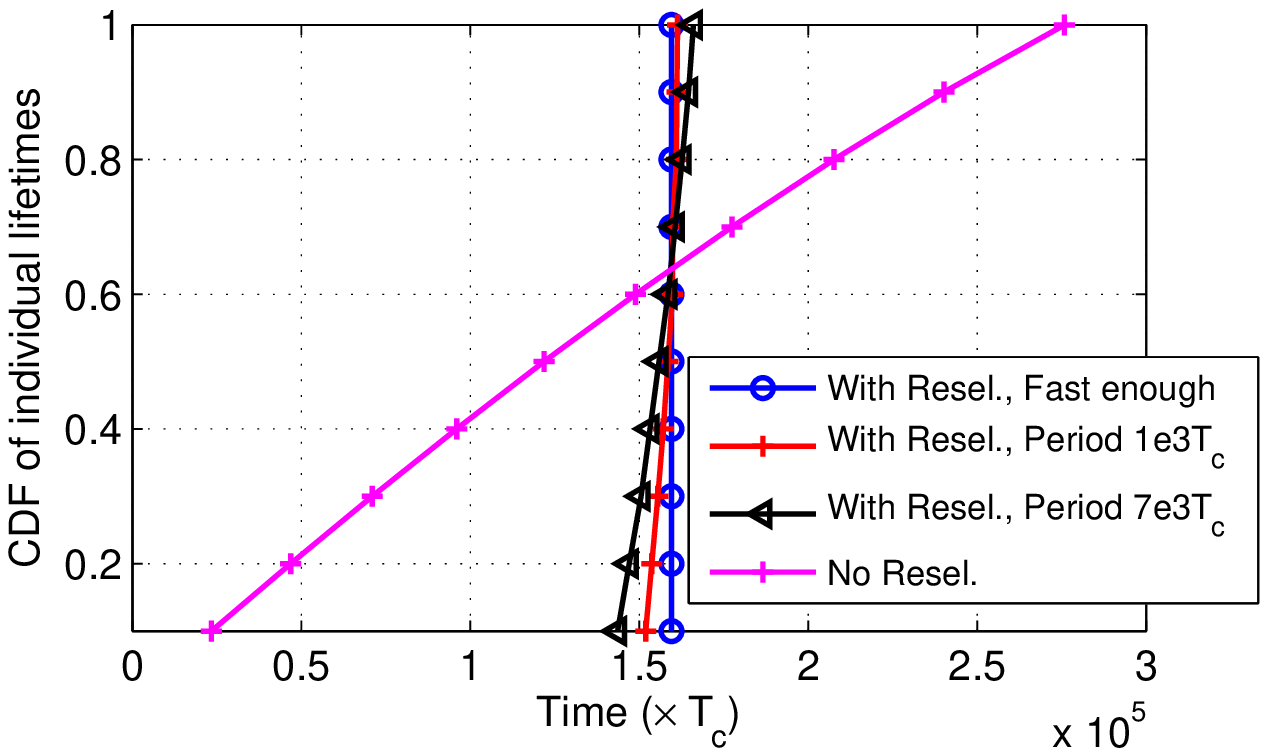}
                \caption{CDF of individual lifetime of machine nodes for different  CH reselection periods.}
                \label{fedli}
        \end{subfigure}
         ~
         \begin{subfigure}[b]{0.5\textwidth}
                \includegraphics[width=3.3in]{ 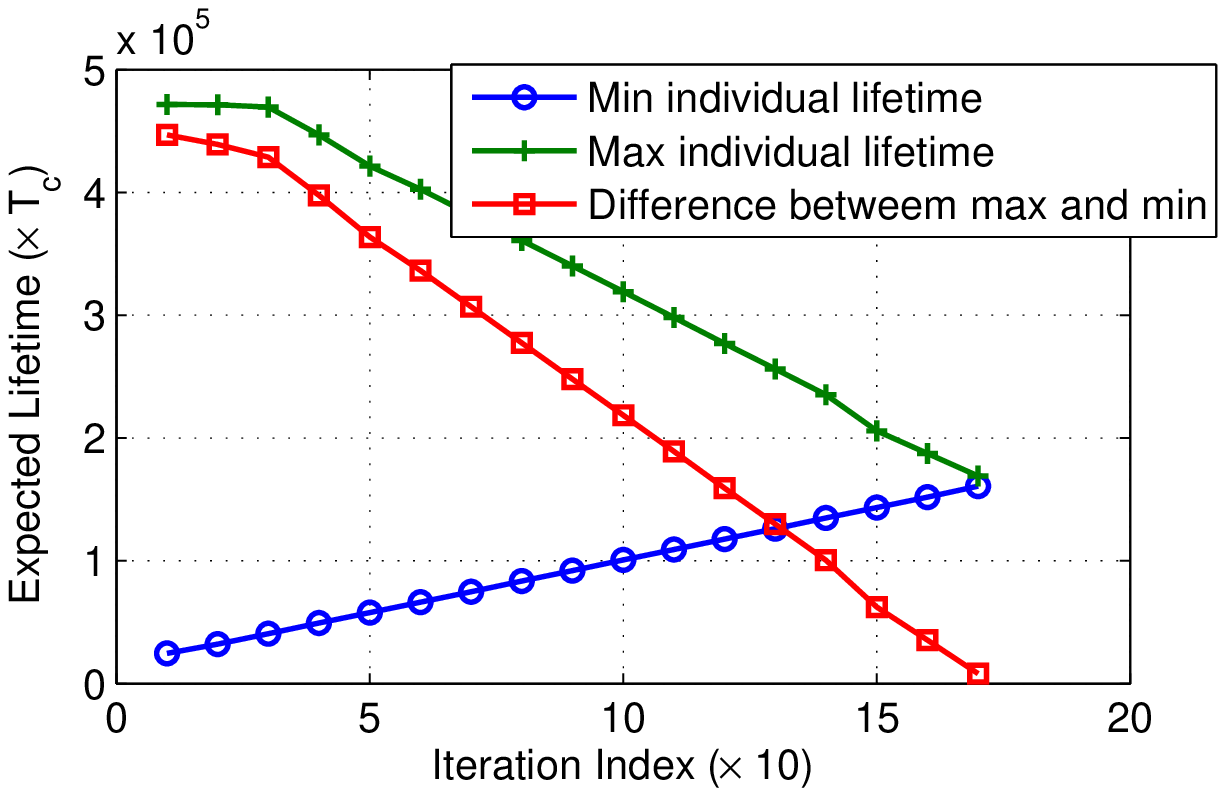}
                \caption{Maximum and minimum of individual lifetimes versus iteration index. Reselection period $ = 1000 T_c.$}
                \label{cfed}
        \end{subfigure} \\
             \caption{Performance evaluation of the proposed CH selection scheme for 10 clustered nodes.  Cluster radius = 50 m, distance from cluster center to the BS\hspace{0.5 mm}=\hspace{0.5 mm}250 m,  $D_i=1$ KByte  $\forall i\in \Psi$, $\xi=2$,  $P_t^h=0.2$ W, $P_t^m=0.05$ W, $P_c=.02$ W, $\Gamma$=13 dB, $w_h\hspace{0.5 mm}=\hspace{0.5 mm}0.4 w_m$\hspace{0.5 mm}=\hspace{0.5 mm}360 KHz.}\label{fig_fed}
\end{figure}

 \subsection{Cluster Reformation}\label{refor}
Here, we investigate the impact of reforming clusters on the network lifetime. As mentioned in the previous subsection, the initial CHs are located at the cluster centers. By reselecting the CHs, a newly selected CH can be located at the cluster border, and hence, the average communications distance to this CH will be higher than the case in which CH is located at the cluster center. In this case, reforming the clusters may improve the energy efficiency of intra-cluster communications if the energy cost for reforming the clusters is low. When a CH is located at the cluster center, the average communications distance to the CH is derived from \eqref{eq:EX3} as $d_{cent}=\sqrt{{z}/{4\sigma}}.$ However, if a node which is located at distance $r$ from the cluster center is selected as the CH, the average communications distance to the CH is derived from \eqref{eqas1} as 
$d_{r}\simeq 0.5/x+{2}r^2x/3-0.25 r^4x^3$, and $x=\sqrt{{ \sigma}/{z}}.$ 
One sees that in the latter case, the average communications distance has been increased approximately by
$2r^2 x/3,$ 
and hence, the average energy consumption increases accordingly.
Denote the Euclidean distance between a CH and its respective cluster center by $r$, the CH duration by $T_{dur}$, the average duty cycle of the connected devices by $T_{c}$, and the average  energy  cost per device for reforming the clusters  by $E_{ref}$. Then, reforming the clusters will save energy if:
$$E_{ref}< \frac{T_{dur}}{T_{c}}[\mathcal E_m^{\text{if ref.}} -\mathcal E_m^{\text{if not ref.}}].$$
In this expression,  $\mathcal E_m^{\text{if ref.}}$ and $\mathcal E_m^{\text{if not ref.}}$ are derived from \eqref{lmm} as:
$$\mathcal E_m^{\text{if ref.}}={E_s+\tilde{D}\frac{P_c+\xi P_t^m}{F_{\mathcal  M}(w_m,P_t^m,\Omega_m (d_{cent}),z)}},\quad \mathcal E_m^{\text{if not ref.}}={E_s+\tilde{D}\frac{P_c+\xi P_t^m}{F_{\mathcal  M}(w_m,P_t^m,\Omega_m (d_r),z)}},$$
respectively. Then, in the case that CH re-selection is performed in long intervals, i.e. $T_{dur}$ is large in comparison with $T_{c}$, and the selected CH is far from the cluster center,  joint CH re-selection and reforming the clusters can further prolong the network lifetime. In section \ref{sim}, we evaluate the impact of  $E_{ref}$ on the feasibility of cluster reforming.

\subsection{Where  should clustering be used?}\label{whref}
 In section \ref{opsi} we have investigated the cluster-size problem for machine nodes uniformly distributed in a cell. In practice, the density of nodes may vary from one place to another.  Then, in order to deploy an M2M solution in a specified region, e.g. smart metering in a building, it is crucial to  investigate the impact of clustering on the network lifetime. \\
 Consider the system model in Fig. \ref{fwh} where the region of interest is shown in gray and $N$ machine devices are planned to be deployed in this region. The radius of this region and the average distance from this region to the BS are denoted by $r$ and $R$ respectively.
%Assume $N$ machine devices are planned to be deployed in a specified region of the cell. The radius of this region and the average distance from this region to the BS are   $r$ and $R$ respectively. 
Clustering should be used in this region when the FED network lifetime can be improved. Using derivations in section \ref{opsi}, the expected FED lifetime of an M2M network with and without clustering is found as: 
\begin{align}
L_c=\frac{E_0T_c}{\frac{1}{N} {\mathcal E}_h^c +\frac{N-1}{N}{\mathcal E}_m^c}, \quad L_d=\frac{E_0T_c}{{\mathcal E}_h^d},
\end{align}
where
\begin{align}
&{\mathcal E}_h^c=E_s^h+\frac{(1+\lambda(N-1)) \tilde D(P_c+\xi P_t^h)}{F_{\mathcal H}(\bar w_h,P_t^h,\Omega_h(R),1)}+\frac{\tilde DP_l(N-1)}{F_{\mathcal M}(w_m,P_t^m,\Omega_m(\bar r),N-1)},  \nonumber\\
&{\mathcal E}_m^c=E_s+\frac{\tilde D(P_c+\xi P_t^m)}{F_{\mathcal M}(w_m,P_t^m,\Omega_m(\bar r),N-1)},  {\mathcal E}_h ^d=E_s^d+\frac{\tilde D(P_c+\xi P_t^d)}{F_{\mathcal H}(w_m+\bar w_h,P_t^h,\Omega_h(R),N)}.  \nonumber
\end{align}
In these expressions, ${\mathcal E}_h^d$ is the average energy consumption in the direct access to the BS and is assumed to be the same for all nodes in the region of interest, ${ E}_s^d$ and $P_t^d$ are the static energy consumption and the transmit power in direct access mode, $w_m$ and $\bar w_h$ are the allocated bandwidths to the CMs and CH respectively, and ${\mathcal E}_h^c$ and ${\mathcal E}_m^c$ are the average energy consumptions in CH and CM modes respectively. Then, to check the feasibility of clustering  one need to check if $L_d<L_c$ is satisfied. Let us derive a tractable necessary condition for the feasibility of clustering in a special case, where $P_l=P_c$, $\mathcal X=\mathcal Y=\text{FDMA}$, and the transmit powers are set to achieve the  predefined average SNRs $s_h$ and $s_b$ at the CH and  BS respectively. Clustering is used when:
\begin{align}
&\hspace{0mm} L_c>L_d,\nonumber\\
&\rightarrow \mathcal E_0-P_cQ\tilde D+\frac{P_t^d N^2\tilde D \xi}{w_t\log(1+s_b)})>\quad\frac{M \tilde D\xi P_t^h}{w_h\log(1+s_b)}+\frac{(N-1)^2\tilde D\xi P_t^m}{w_m\log(1+s_h)}\nonumber,\\
&\rightarrow     {\bar s_b}   \Omega_h(R)(N-M)> {\bar s_h} (N-1) \Omega_m(\bar r)+\frac {P_cQ\tilde D-\mathcal E_0}{\Gamma N_0 \tilde D\xi},\label{cong1} 
\end{align}
where $\bar r$ is the average distance between two random points in a circle with radius $r$, and is found as $\frac{128 r}{45 \pi}$ \cite{avp}. Also,
\begin{align}
 \mathcal  E_0=&NE_s^d-E_s^h-(N-1)E_s; \quad
Q=\frac{M}{w_h\log(1+ s_b)}+\frac{2(N-1)^2}{w_m\log(1+s_h)}-\frac{N^2}{w_t\log(1+ s_b)},\nonumber\\
M=&1+\lambda(N-1); w_t=w_h+w_m; \bar s_x=\frac{s_x}{\log(1+s_x)}, \quad x\in\{b,m\}.\nonumber
\end{align}
Solving the inequality in \eqref{cong1} for $M\ne N$, we have:
\begin{align}
 \Omega_h(R)>  \frac{{\bar s_h} (N-1)}{{\bar s_b}(N-M)} \Omega_m(\bar r) +\frac {P_cQ\tilde D-\mathcal E_0}{{\bar s_b}\Gamma N_0 \tilde D\xi(N-M)}.\label{cong}
\end{align}
The inequality derived in \eqref{cong} represents the general condition which must be satisfied in any region where clustering is feasible. 

%\begin{figure}[!t]
%\centering
%\includegraphics[width=3.5 in]{ 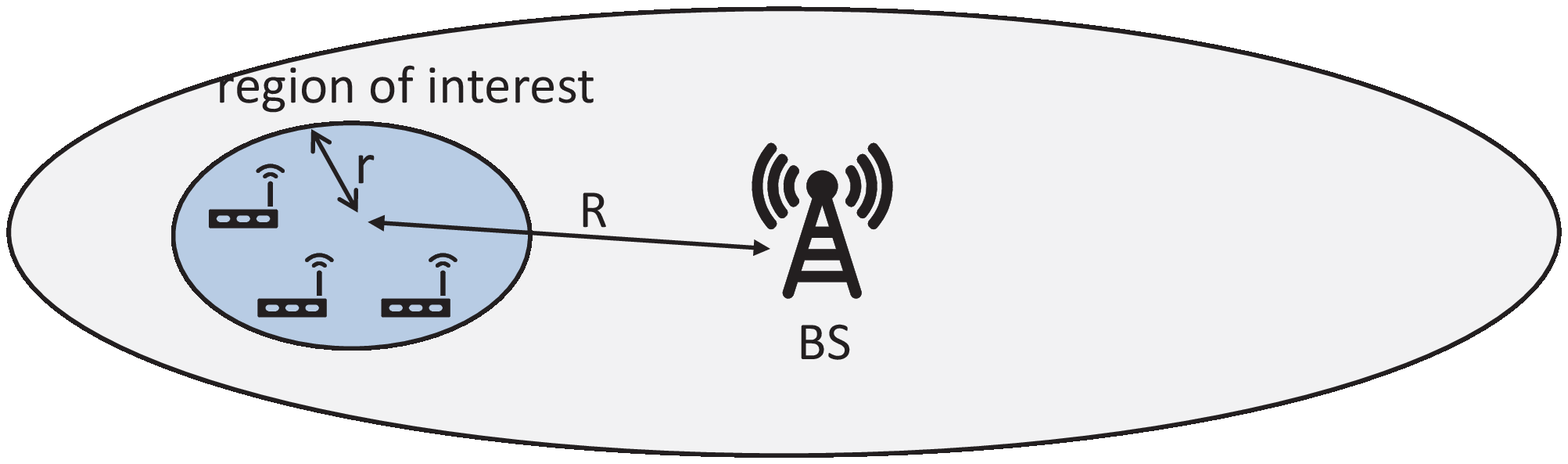}
%\caption{Region of interest in the cell } \label{fwh}
%\end{figure}
%
%\begin{figure}
%        \centering
%                 \includegraphics[width=3.5in]{ 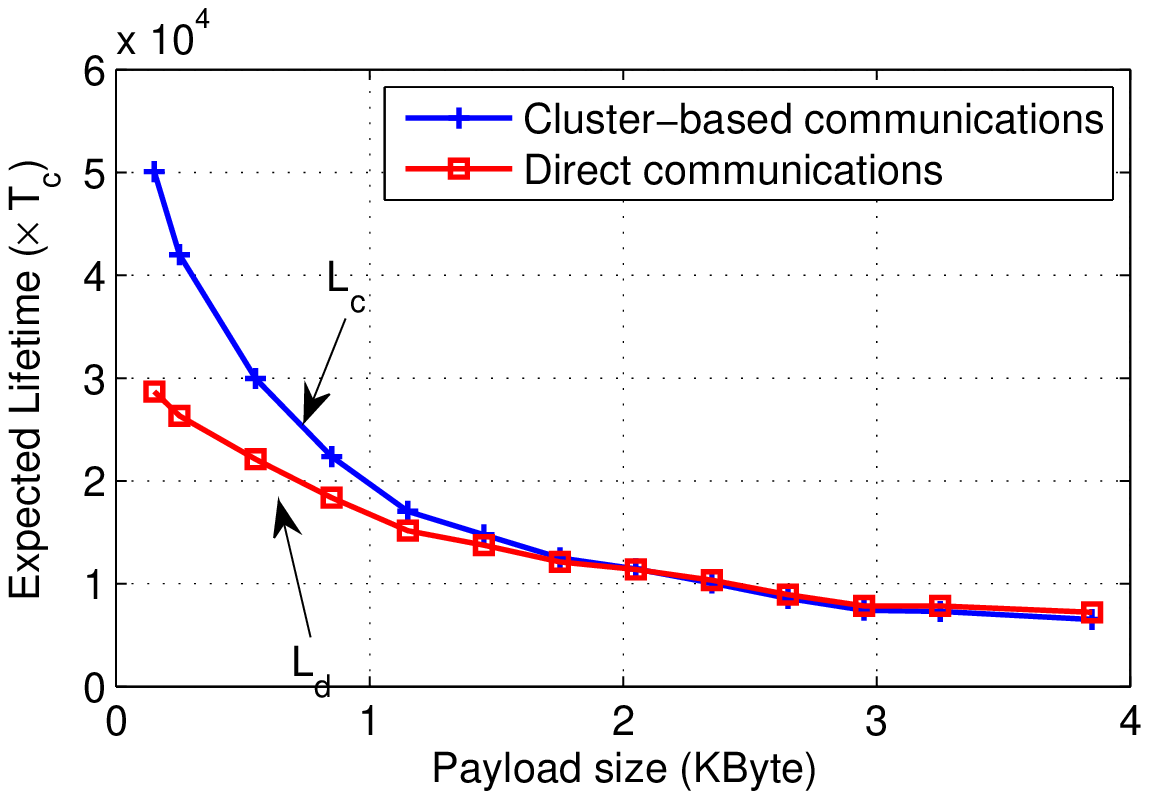}
%        \caption{ Network lifetime versus payload size.  $s_h\hspace{0.5 mm}=\hspace{0.5 mm} s_b=\hspace{0.5 mm}20$ dB, 
% and $\mathcal E_0$= 16 mJ. Other simulation parameters are the same as Fig. \ref{fig_fed}.}\label{ax1}
%\end{figure}

\begin{figure}[!t]
\begin{subfigure}[t]{0.45\textwidth}
\centering
\includegraphics[width=3 in]{ fig_wh.eps}
\caption{Region of interest in the cell } \label{fwh}
\end{subfigure}
~
 \begin{subfigure}[t]{0.55\textwidth}
                \includegraphics[width=3.5in]{ ax_cl.eps}
        \caption{ Network lifetime versus payload size.  $s_h\hspace{0.5 mm}=\hspace{0.5 mm} s_b=\hspace{0.5 mm}20$ dB, 
 and $\mathcal E_0$= 16 mJ. Other simulation parameters are the same as Fig. \ref{fig_fed}.}\label{ax1}
\end{subfigure}
\caption{Investigation on the feasibility of clustering in different regions of the cell.}\label{wcf}
\end{figure}

From \eqref{cong}, one can conclude that the increase in the  cluster size, circuit power consumption, and required SNR at the CH may result in the infeasibility of clustered communications. For any setup that $L_c< L_d$,  clustering can not prolong the network lifetime. One may decrease the number of clustered nodes by making multiple clusters in order to make the clustered communications feasible. In a multi-cell scenario, out-of-cell interference is also a limiting factor which may affect the feasibility of clustering in cell-edge regions where adjacent clusters reuse the same set of time/frequency resources. In this case, machine nodes that observe high interference power may communicate directly with the BS.  Fig. \ref{ax1} presents the FED network lifetime for a group of 10 clustered machine nodes versus payload size, when $\lambda=1$, i.e. the CH does not compress the CM's packets. In this figure, one sees when the payload size goes beyond 2.1 KBs, the direct communications approach outperforms the cluster-based communications approach.  In order to evaluate the tightness of the above proposed necessary conditions for clustering, we predict the crossover point of Fig. \ref{ax1} by solving   \eqref{cong1} for $\tilde D$,
\begin{align}
&\frac {\mathcal E_0-P_cQ\tilde D}{\Gamma N_0 \tilde D\xi}  +{\bar s_b}   \Omega_h(R)(N-M)> {\bar s_h} (N-1) \Omega_m(\bar r)\longrightarrow  \tilde D<16584 \text{ bits}=2.02\hspace{0.5mm} \text{KB},  \label{neq}
\end{align}
where the pathloss functions are given in Table \ref{sim3}.
 The predicted crossover point in  \eqref{neq} matches well with the simulation results in Fig. \ref{ax1}.

\begin{figure*}[!t]
\centering
\includegraphics[width=6 in]{ 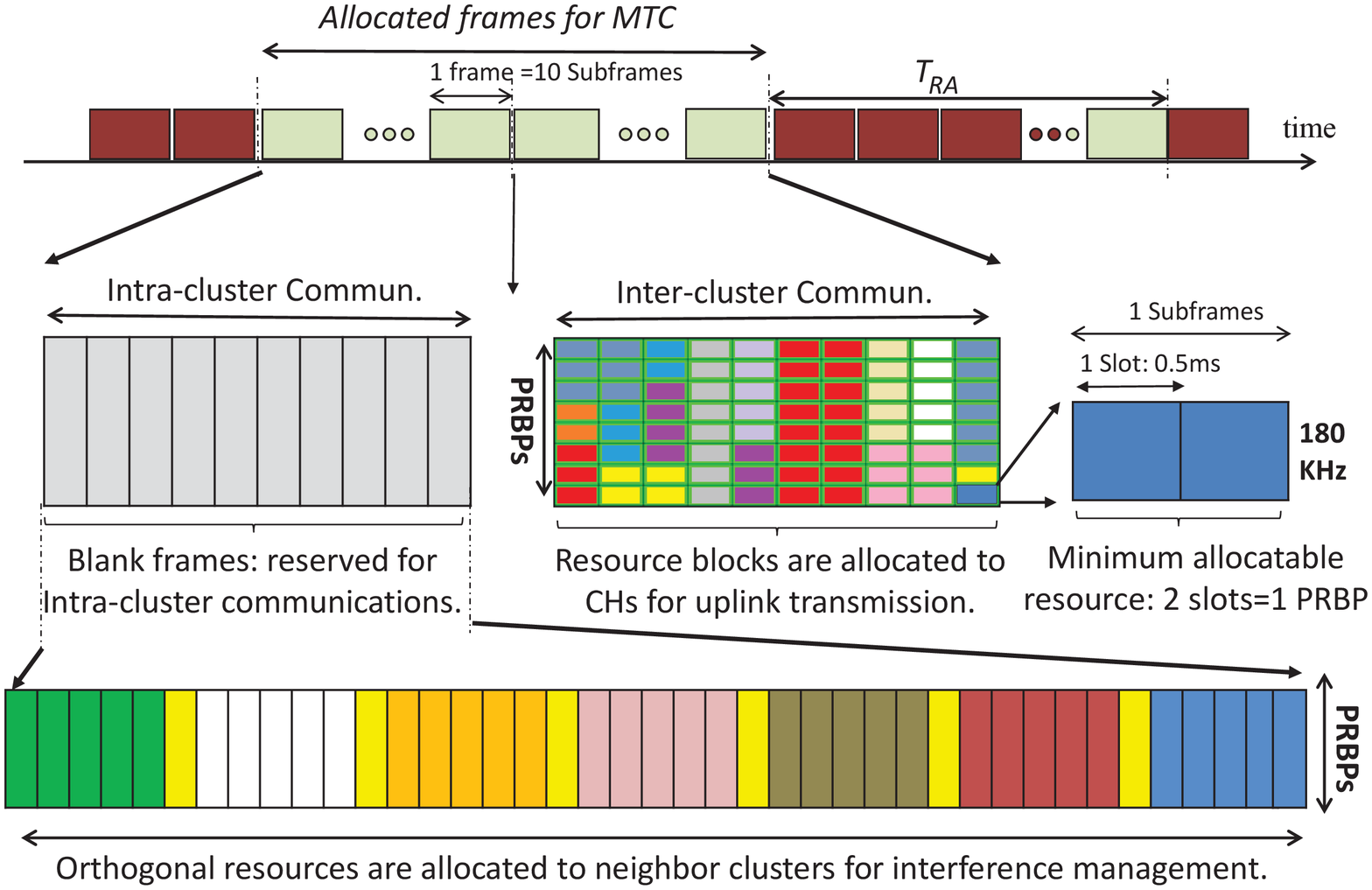}
\caption{ The proposed $E^2$-MAC for LTE systems. } \label{Lframe}
\end{figure*}

%\begin{figure*}[!t]
%\centering
%\includegraphics[width=5 in]{ PMACne1.eps}
%\caption{The proposed $E^2$-MAC for LTE systems.   } \label{Lframe}
%\end{figure*}

%Besides, the existence of the network assistance e.g. synchronization from the cellular network, makes the design different from WSNs, as discussed in section \ref{ope}. 

\section{Energy efficient medium access}\label{asr}
In this section we investigate an energy efficient medium access protocol for M2M communications. 
The communications consist of two phases: (i) intra-cluster communications from CMs to CHs and (ii) inter-cluster communications from CHs and non-clustered nodes to the BS. The two phases may use orthogonal resources e.g. different time slots or different frequency bands. Fig. \ref{Lframe} illustrates a potential frame structure for LTE systems when the two phases use different time resources. In the first phase, all cluster members  send data to their cluster heads. Then, the CHs will forward the data to the BS in the second phase. Also, intra-cluster communications can be an underlay to inter-cluster communication, i.e. uplink resources can be reused for intra-cluster communications, and this is out of the scope of this paper, and the interested reader may refer to \cite{intsh} for details.  
 
Inter-cluster communications from the CHs to the BS may happen either in asynchronous or synchronous mode and should follow  existing cellular standards. In the case of LTE \cite{sched},  the typical way for  asynchronous connection to the BS is the RACH, as discussed in section \ref{rsec}.  In the synchronous mode,  connected devices send their scheduling requests to the BS through the physical uplink control channel (PUCCH). The BS performs the scheduling and sends back
the scheduling grants through the corresponding physical downlink
control channel (PDCCH) for each node. Now, the granted machine nodes are able to send data over the granted Physical Uplink Shared Channel (PUSCH).  
Energy efficient scheduling can be implemented at the BS to further improve the lifetime of the CHs.  The interested reader may refer to our previous works in \cite{vtc}-\cite{wcnc} for more information. 

 In the following, we focus on intra-cluster communications. If the number of clusters in a cell is limited, BS may allocate orthogonal time/frequency resources to the clusters for intra-cluster communications.   In a realistic massive MTC deployment, it could occur the case where there is not enough orthogonal resources and therefore the clusters may reuse the same resources for intra-cluster communications.  The interference from adjacent clusters in the same or nearby cells can be dealt with using link level or network level techniques. For example, a machine node can increase its transmission power when it observes high interference power or use lower modulation order so that it's more robust to interference, and vice versa. From the network level perspective, most interference management schemes which have been standardized for heterogeneous cellular networks with several femotocells  deployed in a macro cell, e.g. almost blank subframe (ABS) \cite{3gppn}, and frequency planning can be used for interference avoidance between clusters. Besides, random access based approaches can be used for intra-cluster communications to further avoid interference between adjacent clusters.  The proposed $E^2$-MAC in Fig. \ref{Lframe} benefits from an interference-aware resource allocation scheme for intra-cluster communications. Depending on the cluster-size, and hence the traffic load in each cluster, the available resources for intra-cluster communications are divided into several bunches of orthogonal resources. Then, these orthogonal resources are allocated to neighbor clusters in order to reduce the received interference at the CHs. Also, the BSs can exchange interference-coordination information with neighbor cells in order to mitigate the inter-cell interference for cell-edge clusters.
  
  Inside each cluster, since only a portion of  machine nodes might be active in each time interval, the communications protocol for intra-cluster communications needs to to be scalable and able to adapt to the changes in the communications needs of the active nodes.    Among the proposed protocols in literature, CSMA/CA is a promising approach for intra-cluster communications as it does not need additional control overhead and can adapt to the changes in the number of connected nodes \cite{bo2}. In addition, CSMA/CA has the potential of avoiding interference from neighbor clusters.    In the sequel, we investigate the  energy efficiency of CSMA/CA and its shortcomings in high traffic-load regimes. To overcome the shortcomings and further improve the energy efficiency of the network, we introduce the $n$-phase CSMA/CA.

%This is due to the fact that  in medium to low traffic-load regimes, contention-based protocols outperform the others in time efficiency, e.g. less delay, while not efficient in high traffic-load cases because of collisions \cite{poshti}.

\subsubsection{Energy efficiency of non-persistent CSMA/CA}\label{bse}
Different transmission techniques can be used in CSMA/CA, for example 1-persistent CSMA/CA, p-persistent CSMA/CA, non-persistent CSMA/CA, or
the RTS/CTS mechanism. Here, we focus on  non-persistent CSMA/CA because of its low cost in implementation. Non-persistent CSMA/CA has been standardized in IEEE 802.15.4 for low data rate solutions like ZigBee and WirelessHART \cite{anana}. In non-persistent CSMA/CA, a machine node waits for a random amount of time after sensing a busy channel and repeats this algorithm until finding the channel idle to transmit data. In the following, we analyze the energy efficiency of non-persistent CSMA/CA.

\begin{figure}[!t]
\begin{subfigure}[t]{0.45\textwidth}
\centering
\includegraphics[width=1.5in]{ 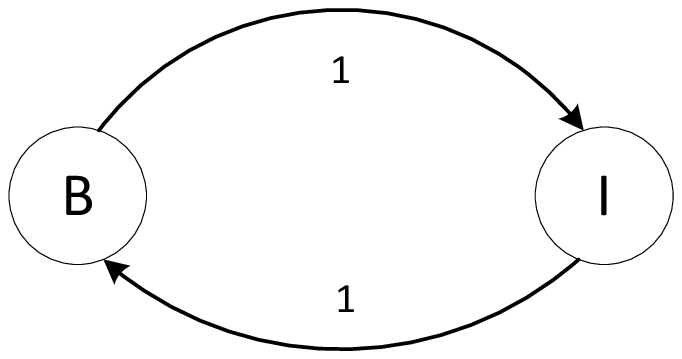}
\caption{State transitions in non-persistent CSMA/CA.}\label{mcnp}
\end{subfigure}
~
 \begin{subfigure}[t]{0.55\textwidth}
\includegraphics[width=3.5in]{ 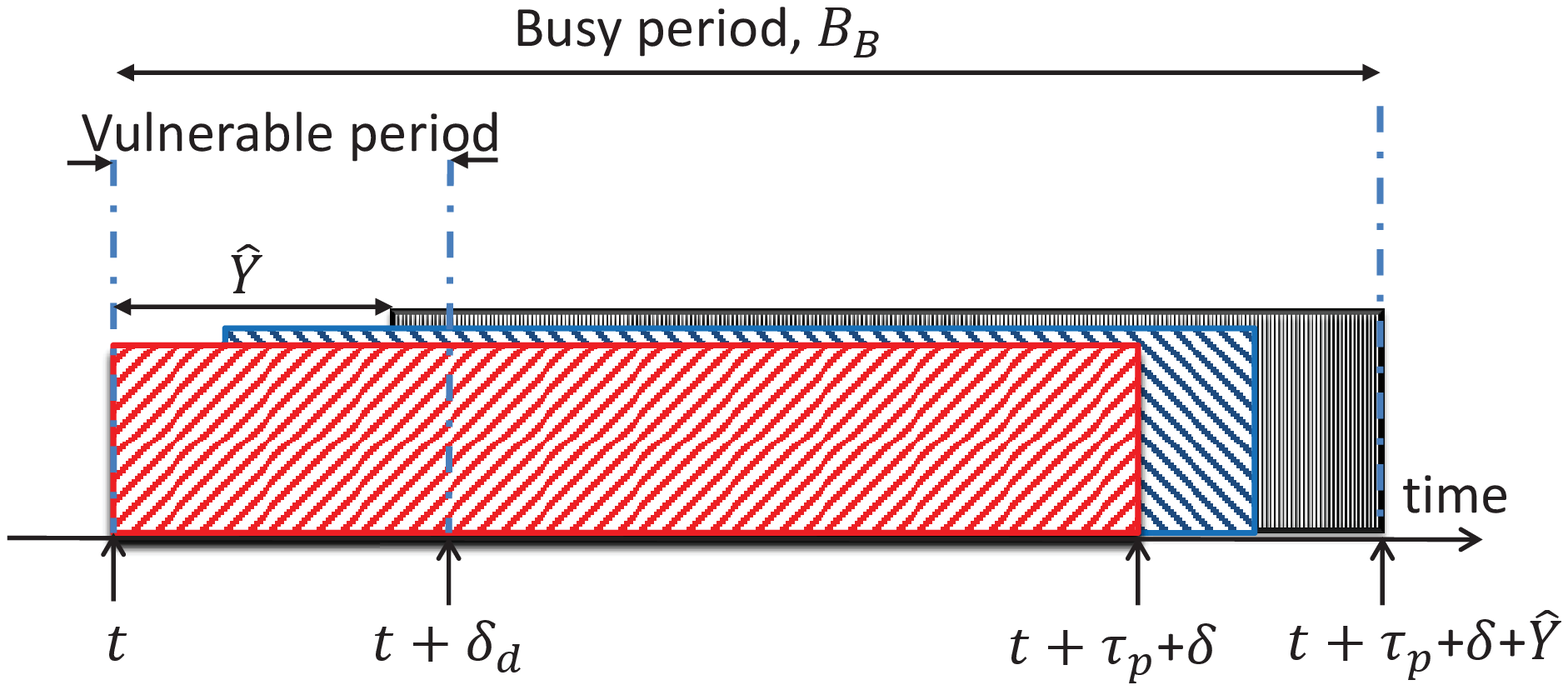}
\caption{Busy transmission period in non-persistent CSMA/CA \cite{map}.}\label{mark}
\end{subfigure}
\caption{Idle and busy periods of non-persistent CSMA/CA}\label{ooomark}
\end{figure}

Define the aggregated packet arrival rate of all machine nodes in a cluster as $g$, which includes both new arrivals and retransmitted ones. We assume that the acknowledgment packets are transmitted in an independent channel to simplify the analysis.  
There are two states of channel utilization: idle and busy. In the busy state, the transmission can be either successful of unsuccessful. The channel utilization is modeled as a two-state Markov process as shown in Fig. \ref{mcnp}. The probability of each possible transition between states is 1.  Based on this model, the probabilities of the idle and busy states are the same, i.e. $ \pi_I=\pi_B=0.5$. The average duration of the idle state is the average time between two  consecutive packets, i.e. $B_I=1/g$. Define $\tau_p$ and $\delta_d$ as the transmission and detection delay. The average duration of the busy period is $B_B=\tau_p+\delta+\hat{Y}$ where $\delta$ is the propagation delay. Also, $\hat{Y}$ denotes the average time at which the last interfering packet is scheduled within a transmission period that started at time 0, as illustrated in Fig. \ref{mark}.  $\hat{Y}$ is calculated as follows:
\begin{align}
F_{{Y}}(y)&=pr(\text{no arrival during $\delta_d-y$})= e^{-g(\delta_d-y)},\nonumber\\
\text{and} \quad \hat{Y}&=\delta_d-({1-e^{-g\delta_d}})/g.
\end{align}
Packet transmission will be successful if it starts after an idle period and no other node starts transmission after that. The time-averaged idle channel probability, which represents the probability that the channel is  idle when a new packet arrives in the network, is derived as: 
\begin{align}
p_i&=\frac{\pi_IB_I}{\pi_IB_I+\pi_BB_B}=\frac{1/g}{1/g+T-(1-e^{-g\delta_d})/g}=1/({gT+e^{-g\delta_d}}),
\end{align}
where $T=\tau_p+\delta_d+\delta$.
Also, the probability of no-transmission after the transmission of a tagged packet is the probability of no-transmission in $\delta_d$, and is derived as $p_s=e^{-g\delta_d}$. Then, the probability of successful packet transmission being happening when a new packet arrives in the network is the multiplication of time-averaged idle channel probability, $p_i$, and no collision after that, $p_s$, as follows:
\begin{align}
p_{is}&=p_i\times p_s=\frac{1/g \times e^{-g\delta_d}}{1/g+ \tau_p+\delta+(\delta_d-(1-e^{-g\delta_d})/g)}=1/({g T e^{g\delta_d}+1}).\label{psnp}
\end{align}
The average amount of consumed energy for each new packet that arrives  in the network is calculated as:
\begin{align}
{ E}_{cons}&=(1-p_i){ E}_{B}+p_i(1-p_s) { E}_F+p_i p_s { E}_S,
\end{align}
 where ${ E}_S$ models the energy consumption in a successful packet transmission, ${ E}_F$ models the energy consumption  in an  unsuccessful packet transmission, and ${ E}_B$ models the energy consumption after a busy sensed channel, as follows:
\begin{align}
{ E}_S&=(P_c+\xi P_{t}^m)\tau_p+P_l\tau_r, 
{ E}_F={ E}_S+P_l\theta_f, \text{ and}\quad
{ E}_B=P_l\theta_b.\label{be}
\end{align}
In \eqref{be}, $\theta_b$ and $\theta_f$ are the average backoff after sensing a busy channel and collision respectively,  and $\tau_r$ is the round-trip-time delay from successful packet transmission to the acknowledgment packet arrival.
 Then, one can derive the energy efficiency of the network for intra-cluster communications as follows:
\begin{align}
U_E(g)&=\frac{\tilde D\hspace{0.5mm} p_{is}}{ { E}_{cons}}=\frac{\frac{\tilde D}{({gTe^{g\delta_d}+1})}}{\frac{{ E}_S}{1+gTe^{g\delta_d}} +\frac{ gTe^{2g\delta_d}}{(1+gTe^{g\delta_d})^2}{ E}_F+(1-\frac{e^{g\delta_d}}{1+gTe^{g\delta_d}}){ E}_B}\nonumber\\
&=\frac{\tilde D }{{ E}_S+\frac{gTe^{2g\delta_d}}{gTe^{g\delta_d}+1}{ E}_F+(1+(gT-1)e^{g\delta_d}){ E}_B}.\label{ue}
\end{align} 
 The throughput of the network for intra-cluster communications is derived by finding the portion of time in which successful transmission happens, as follows:
\begin{align}
U_S(g)&=\frac{\pi_B\tau_p p_s}{\pi_BB_B+\pi_IB_I}R_{in}
 =\frac{ge^{-g\delta_d}\tau_p}{gT+e^{-g\delta_d}}R_{in}=\frac{g\tau_p}{1+gTe^{g\delta_d}}R_{in},\label{us}
\end{align}
in which
\begin{equation} 
R_{in} = w_m\log(1+\frac{P_t^m}{N_0\Gamma \Omega_m(d_m)w_m}).\nonumber
\end{equation}
One can see that the expression in \eqref{us} quite matches the throughput analysis in section 4.1 of \cite{map}. 
Define packet delay as the time interval between packet arrival and successful transmission. Then, the average packet delay is derived by considering the average time spent in backoffs and retransmissions before a successful packet transmission, as follows:

\begin{align}
{D}_{c}(g)&= \label{d0p}\sum\limits_{k=0}^{k_m }(1-p_{is})^k p_{is}\bigg[\tau_p+k\big(\frac{1-p_i}{1-p_{is}}\theta_b
+p_i\frac{1-p_s}{1-p_{is}}(\theta_f+\tau_p)\big)\bigg],\\
& \mathop \approx\limits^{k_m\gg 1}  \tau_p+(\frac{1}{p_{is}}-1)\bigg[\frac{1-p_i}{1-p_{is}}\theta_b
+p_i\frac{1-p_s}{1-p_{is}}(\theta_f+\tau_p)\bigg],\nonumber
\end{align}
where $\frac{1-p_i}{1-p_{is}}$ and $p_i\frac{1-p_s}{1-p_{is}}$ are the probabilities of unsuccessful transmission due to a busy sensed channel and collision respectively. Also, $k_m$ is the maximum number of times that a machine node tries to transmit a specific packet.
%  
%If a CSMA/CA-based system is in a stable equilibrium, the average number of arriving packets to the system must be equal to the average number of departing packets. Then, the following condition must hold:
%\begin{equation}\label{la45}
%\omega =p_{is} g=\frac{g}{{g T e^{g\delta_d}+1}}
%\end{equation}
%where $\omega$ is the aggregated packet generation rate of all clustered nodes. By increasing the number of nodes which are using CSMA/CA protocol, the aggregated  packet generation rate, $\omega$, and hence the traffic load, $g$, will be increased. Then, the probability of successful transmission, $p_{is}$, decreases, which results in extra energy consumption for being in the active mode, listening to the channel, and retransmitting the data. 

%improvement in the spectral efficiency of the system is achieved at the cost of degradation in the  energy efficiency and  delay performance of the system. This is due to the fact In Fig. \ref{figt1}, one sees that the spectral efficiency increases in the traffic load up to a certain level. Also, from \eqref{psnp} and \eqref{d0p} we can conclude that the probability of collision and packet delay increase in the traffic load. Then, improvement in  spectral efficiency of a CSMA/CA-based system  results in degradation of energy efficiency and delay performance.

The energy efficiency, spectral efficiency, and delay performance  of a CSMA/CA-based system are depicted in Fig. \ref{figt}. In Fig. \ref{figt1}, one sees that the energy efficiency and delay performance of the system degrade in the traffic load. This is due to the fact that the probability of collision increases in the traffic load. Also, one sees that the spectral efficiency of the system increases in the traffic load in low to medium traffic loads, and decreases in the traffic load in high traffic loads. Taking the first derivative of $U_S$ in \eqref{us} with respect to $g\tau_p$, one sees that  the spectral efficiency is maximized when 
\begin{equation}\label{lm}g\tau_p=\frac{2}{a} \text{LambertW}( \sqrt{a}/2).\end{equation}
In this expression, LambertW function is the inverse of the function $f(x) = x\exp(x)$, $a=\frac{\delta_d}{\tau_p}$, and $a\ll 1$ is assumed. Inserting $a=0.005$ in \eqref{lm}, one sees that  $U_S$ is maximized when $g\tau_p=13.7$ which matches well with the simulation results. Fig. \ref{figt2} shows the tradeoffs between energy and spectral efficiency, and delay and spectral efficiency when $g\tau_p\le 10$. One sees that any improvement in the spectral efficiency of the system is achieved at the cost of degradation in the  energy efficiency and  delay performance of the system.

In the following section, we present a load-adaptive hybrid TDMA/CSMA protocol, called $n$-phase CSMA/CA, which offers a tunable trade-off between energy efficiency, spectral efficiency, and  delay performance of the network. 
%%
%\begin{figure}[h]
%\centering
%\includegraphics[ width=3.5in]{ 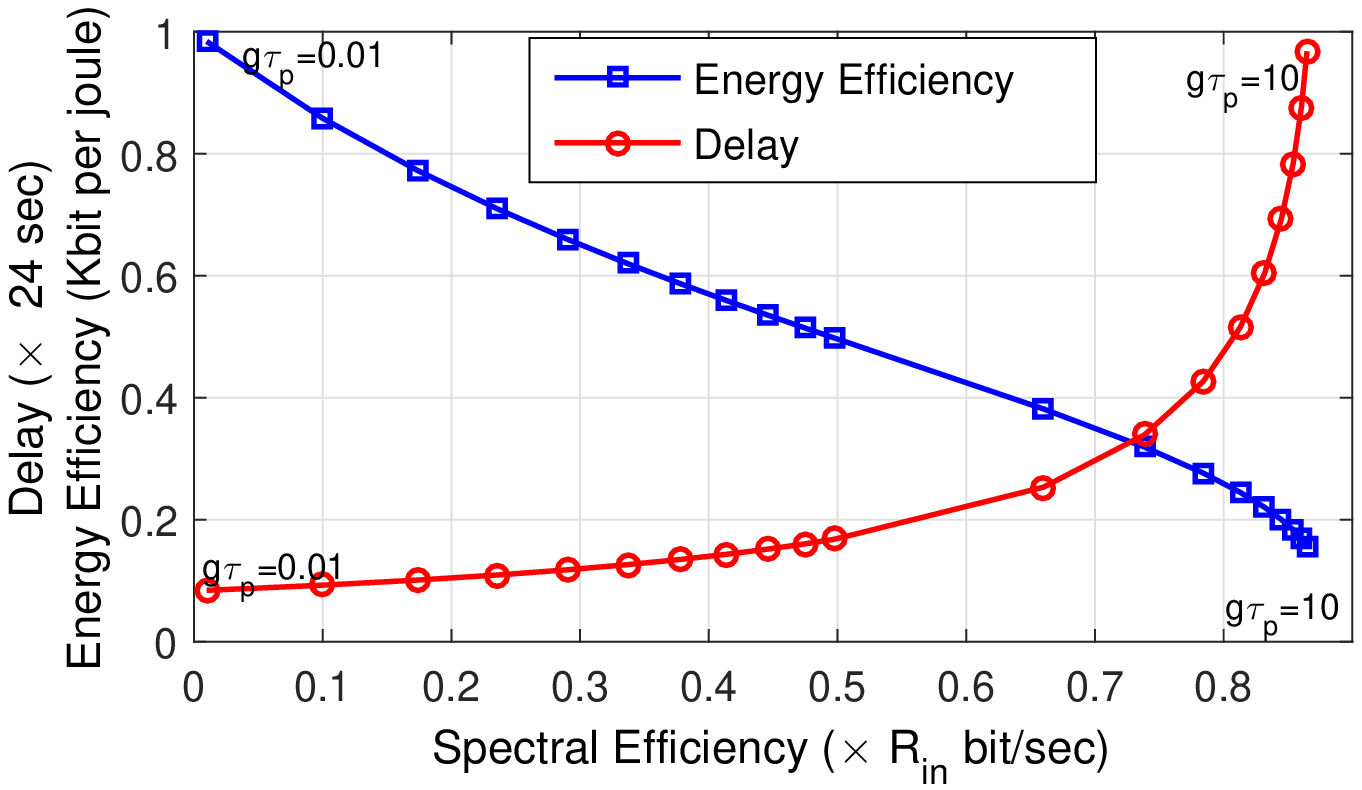}
%\caption{ Energy  efficiency, delay, and spectral efficiency of a CSMA/CA-based system.  Parameters: $T= 1$ sec, $\tilde D=5$, $\delta_d/\tau_p=0.005$, $E_B=2$ mJ, $E_S=5$ mJ, and $E_F=6$ mJ.} \label{tra}
%\end{figure}
%

\begin{figure}[t!]
    \begin{subfigure}[t]{0.5\textwidth}
        \includegraphics[trim={0cm 0cm 0.9cm 0cm},clip,width=3.3in]{ 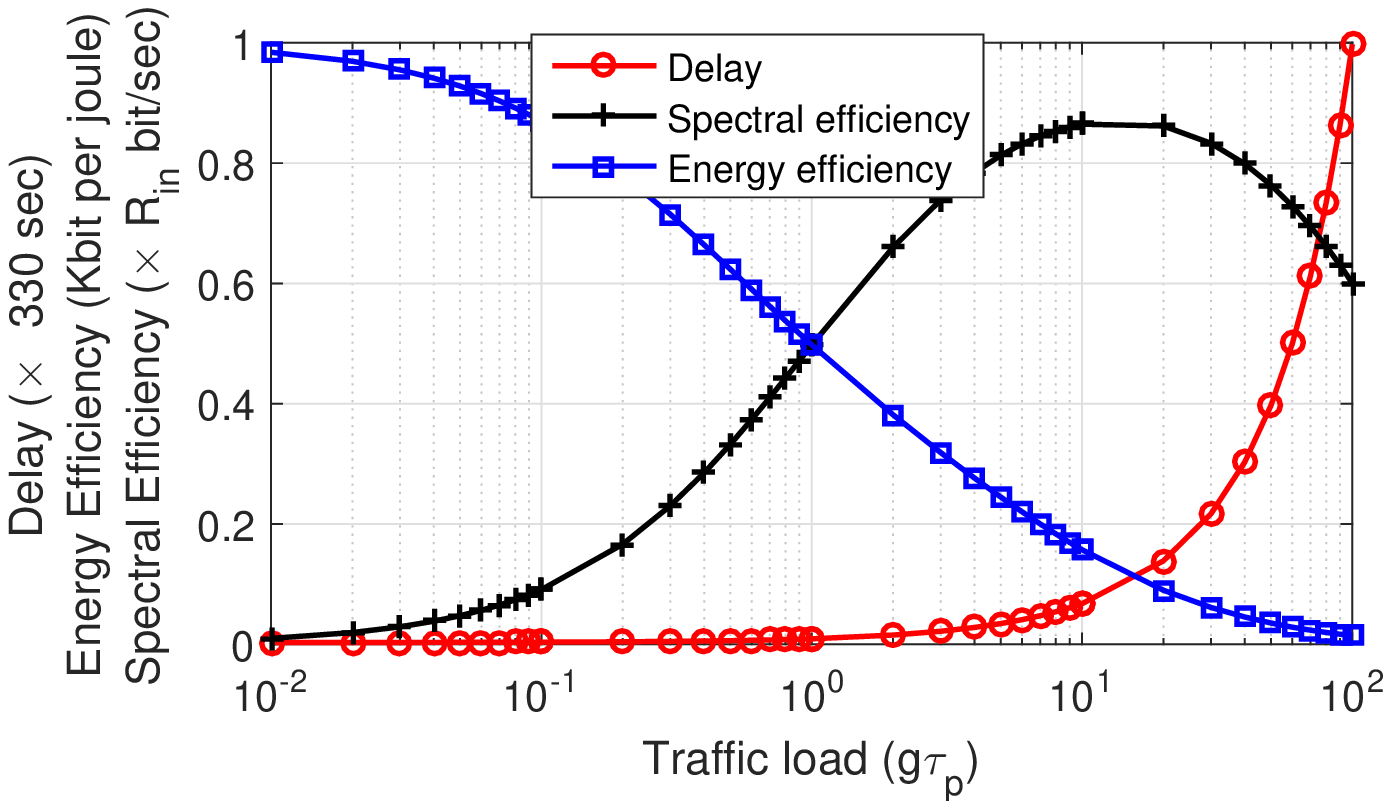}
        \caption{Energy  efficiency, delay, and spectral efficiency versus traffic load}
        \label{figt1}
    \end{subfigure}%
~
    \begin{subfigure}[t]{0.5\textwidth}
        \includegraphics[trim={0cm 0cm 0.9cm 0cm},clip,width=3.3in]{ Ntr.eps}
        \caption{Energy  efficiency and delay versus spectral efficiency}
\label{figt2}
    \end{subfigure}
\caption{Energy  efficiency, delay, and spectral efficiency of a CSMA/CA-based system.  Parameters: $T= 1$ sec, $\tilde D=5$, $\delta_d/\tau_p=0.005$, $E_B=2$ mJ, $E_S=5$ mJ, and $E_F=6$ mJ. } \label{figt}
\end{figure}

\begin{figure*}[t!]
    \centering
    \begin{subfigure}[t]{0.5\textwidth}
        \centering
        \includegraphics[width=2.8in,height =2.8in]{ 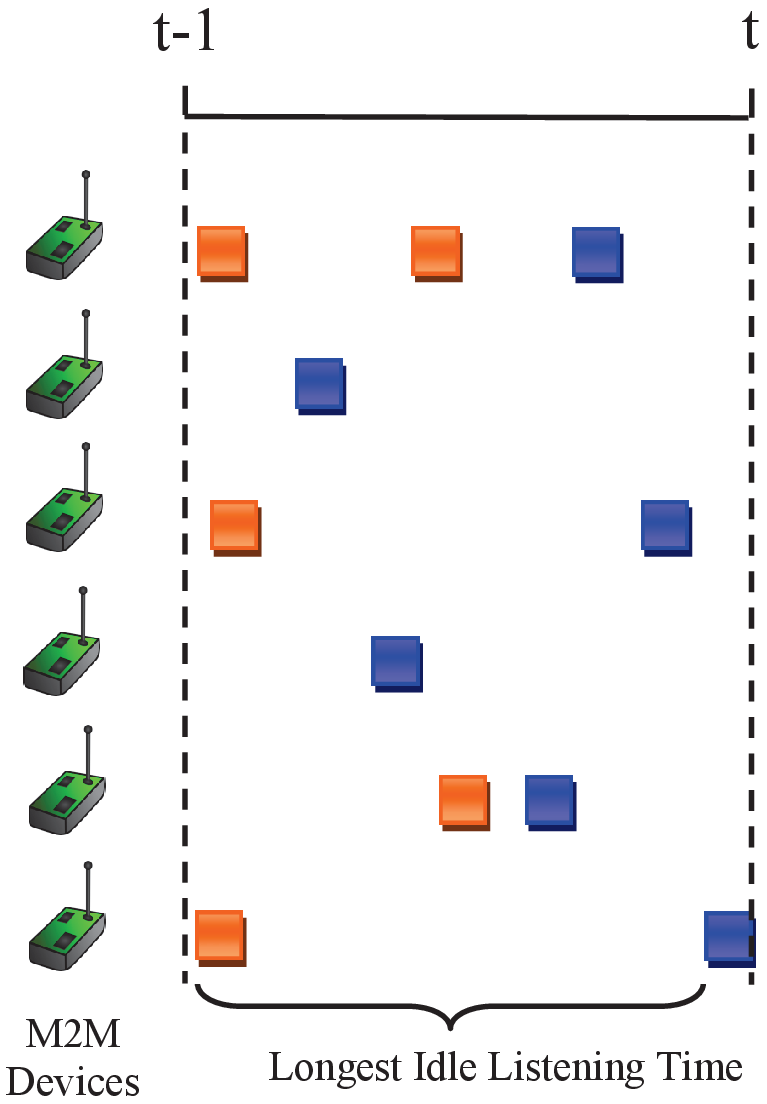}
        \caption{Ordinary CSMA/CA}
    \end{subfigure}%
~
    \begin{subfigure}[t]{0.5\textwidth}
        \centering
        \includegraphics[width=2.8in,height =2.8in]{ 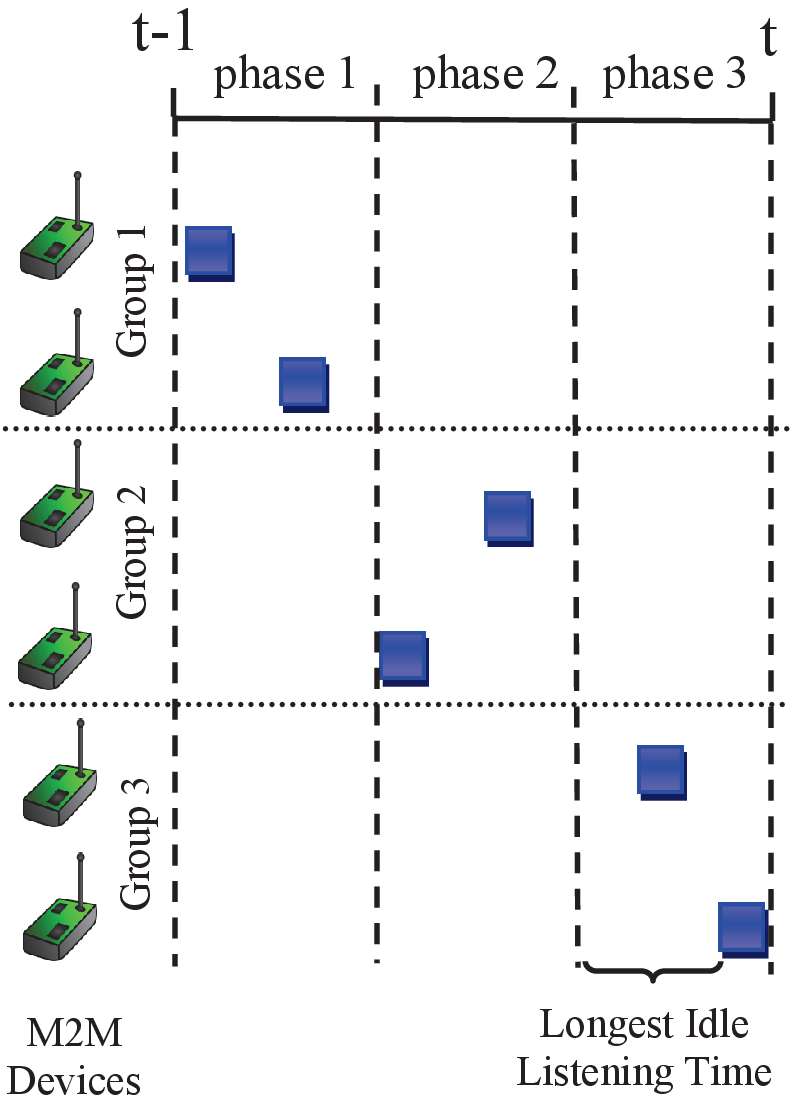}
        \caption{$n$-phase CSMA/CA (n=3)}
    \end{subfigure}
\caption{Ordinary CSMA/CA and $n$-phase CSMA/CA when $n$=3.  Red- and blue-colored squares show failed and successful transmissions respectively. The idle listening time and collisions are decreased in the $n$-phase CSMA/CA scheme significantly. } \label{cs}
\end{figure*}

%
%\begin{figure*}[!t]
%\centering
%\includegraphics[width=5in]{ csma1.png}
%\caption{ Ordinary CSMA/CA and $n$-phase CSMA/CA. The idle listening and collisions are decreased. } \label{cs}
%\end{figure*}

\subsubsection{$N$-phase CSMA/CA}\label{mul}
The major drawback of the non-persistent CSMA/CA is its inherent inefficiency in the high traffic-load regime, i.e. increasing traffic load prolongs the idle-listening time and decreases the successful transmission probability, thus wastes energy. To solve the issue, we try to reduce the contention among nodes. To this end, we present a flexible and load-adaptive multiple  access protocol, called $n$-phase CSMA/CA, which divides each contention interval into $n$ phases, as illustrated in Fig. \ref{cs}. In each phase, only a portion of the CMs are permitted to compete for channel access. Before the assigned phase starts, each node keeps sleeping instead of listening  and newly arrived packets are buffered. 
Note that when $n=1$, it is the same as the conventional CSMA/CA. When $n$ is sufficiently large, at most one user will be assigned to each phase and it is the same as the scheduling-based MAC. Therefore, $n$-phase CSMA/CA provides a tradeoff between 
contention- and scheduling-based medium access schemes. 
By choosing an appropriate $n$, the probability of successful packet transmission can be increased to reduce both the number of collisions and idle listening time to achieve the desired energy efficiency. To explore the impact of $n$ on the performance of the network, in the  following we derive the energy efficiency, delay, and spectral efficiency as a function of $n$. 
 
 By using $n$-phase CSMA/CA, the available users will be divided among $n$ phases, and hence, the corresponding traffic load in each phase  will be $g_n\simeq\frac{g}{n} $.  Then, the energy  and spectral efficiency of the network using $n$-phase CSMA/CA are derived  from \eqref{ue}-\eqref{us} as:
\begin{align}
U_E(g_n)&=\frac{\tilde D }{{ E}_S+\frac{g_nTe^{2g_n\delta_d}}{g_nTe^{g_n\delta_d}+1}{ E}_F+(1+(g_nT-1)e^{g_n\delta_d}){ E}_B}\nonumber\\
\text{and}& \quad U_S(g_n)=\frac{g_nT+e^{-g_n\delta_d}-1}{1+g_nTe^{g_n\delta_d}}R_{in}.\nonumber
\end{align}
Also, the average packet delay for $n$-phase CSMA/CA is derived as follows:
\begin{align}
{D}_{nc}(g_n)&=\label{dnp}\sum\limits_{k=0}^{k_m }(1-{\tilde p}_{is})^k {\tilde p}_{is}\bigg(\tau_p+k\big(\frac{1-{\tilde p}_i}{1-{\tilde p}_{is}}\theta_b
+{\tilde p}_i\frac{1-{\tilde p}_s}{1-{\tilde p}_{is}}(\theta_f+\tau_p)\big)\bigg) \\
& \mathop \approx\limits^{k_m\gg 1}  \tau_p+(\frac{1}{\tilde p_{is}}-1)\big(\frac{1-{\tilde p}_i}{1-{\tilde p}_{is}}\theta_b
+{\tilde p}_i\frac{1-{\tilde p}_s}{1-{\tilde p}_{is}}(\theta_f+\tau_p)\big)\nonumber
\end{align}
where ${\tilde p}_i=\frac{1}{n}\frac{1}{{g_nT+e^{-g_n\delta_d}}}$, ${\tilde p}_s=e^{-g_n\delta_d}$, ${\tilde p}_{is}={\tilde p}_{i}{\tilde p}_{s}$.

\subsubsection{Performance tradeoff of $n$-phase CSMA/CA}\label{ptra}
 Fig. \ref{mph} represents the tradeoff between energy efficiency, spectral efficiency, and delay performance of a network with different numbers of phases. By increasing the number of phases, the probability of successful transmission increases which results in higher energy efficiency due to a less number of retransmissions and shorter time spending  in idle-listening mode. In the same time, one sees that the average packet delay increases in the number of phases because of packet buffering until the assigned slot starts. Furthermore, the spectral efficiency of  network decreases as the number of phases increases. The presented tradeoff in  Fig. \ref{mph} shows how  one can sacrifice the delay  and spectrum efficiency performance of the network to enable energy efficient M2M communications, and hence, achieve higher levels of battery lifetimes.  For example in the case of delay-constrained applications, one can find the appropriate number  of phases by choosing the maximum $n$ which satisfies the delay constraint.

\subsubsection{Performance tradeoff of  $n$-phase CSMA/CA with zero detection delay}
When $\delta_d$ is negligible, the $U_E$, $U_S$, and $D_{nc}$ expressions  can be rewritten as:
\begin{align}
U_E(g_n)& \approx \frac{\tilde D }{{ E}_S+\frac{g_nT}{g_nT+1}{ E}_F+g_nT{ E}_B}, \quad U_S(g_n) \approx \frac{g_nT}{1+g_nT}R_{in},\label{aue}\\
{D}_{nc}(g_n)&\approx \sum\limits_{k=0}^{k_m }(1-\frac{1/n}{1+g_nT})^k \frac{1/n}{(1+g_nT)}\big(\tau_p+k\theta_b\big) \label{aud}\mathop \approx\limits^{k_m\gg 1}  \tau_p+\big( n(1+g_nT)-1\big)\theta_b.
\end{align}
If $U_s^{n}$ denotes  the normalized energy efficiency to $R_{in}$, one can derive the tradeoff between energy and spectral efficiency as:
\begin{align}
U_E& \approx \frac{\tilde D }{{ E}_S+U_S^n { E}_F+\frac{U_S^n}{1-U_S^n}{ E}_B}.\label{tr}
\end{align}
From \eqref{tr}, one sees how increasing spectral efficiency $U_S^n$ results in energy efficiency reduction. Similarly, one can derive the tradeoff between delay and spectral efficiency as:
\begin{align}{D}_{nc}\approx\tau_p+(n-1)\theta_b+n\theta_b\frac{U_S^n}{1-U_S^n}\label{trr},\end{align}
from which, we see that the packet delay increases in the spectral efficiency, and hence, the delay performance of the system degrades as the spectral efficiency improves.

 The novel contention-division concept  in the $n$-phase CSMA/CA can be applied in other contention-based protocols, e.g. ALOHA and 802.11, to improve their energy  efficiency.

\begin{figure}[!t]
  \centering
  \begin{minipage}[b]{0.45\textwidth}
    \includegraphics[width=1.1\textwidth]{ 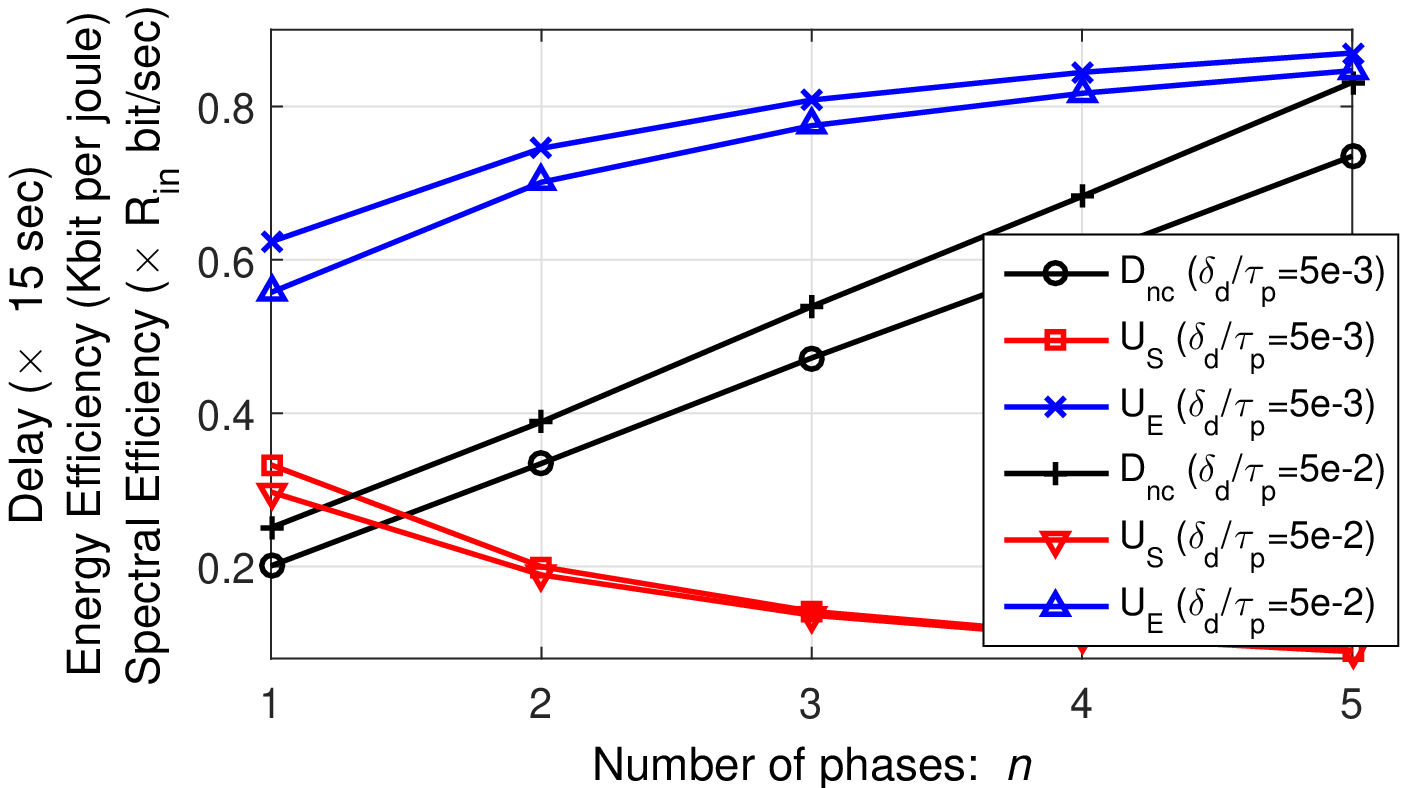}
\caption{ Energy  efficiency, delay, and spectral efficiency for the $n$-phase CSMA/CA. The parameters are the same as Fig. \ref{figt}.} \label{mph}
  \end{minipage}
~
  \begin{minipage}[b]{0.45\textwidth}
    \includegraphics[width=1.1\textwidth]{ 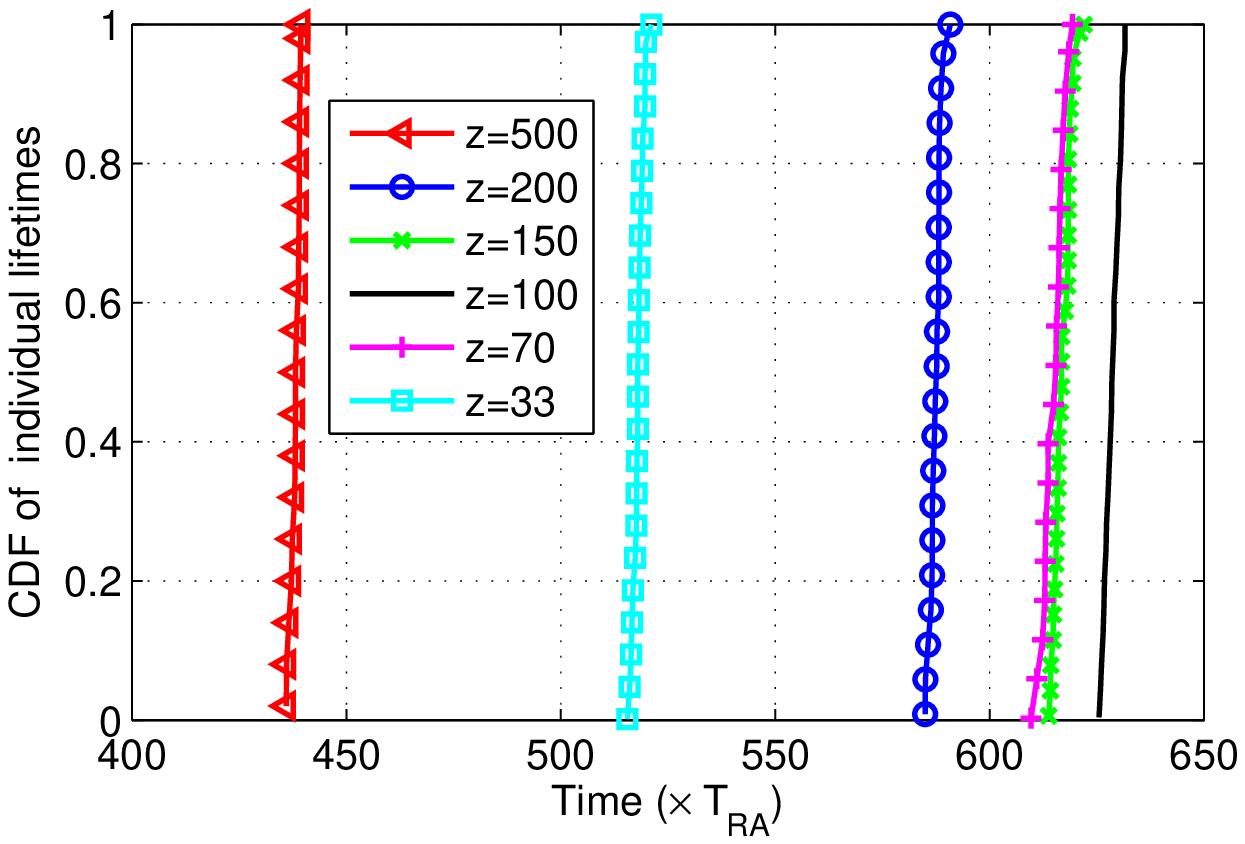}
\caption{The expected CDF of individual lifetimes has been depicted versus cluster-size.  } \label{life}
  \end{minipage}
\end{figure}

%\begin{figure}[h]
%\centering%
%\includegraphics[trim={0cm 0.05cm 0.9cm 0.4cm},clip,width=3.5in]{ m_phase_del.eps}
%\caption{ Energy  efficiency, delay, and spectral efficiency for the $n$-phase CSMA/CA. The parameters are the same as Fig. \ref{figt}.} \label{mph}
%\end{figure}
%
%\begin{figure}[!t]
%\centering
%\includegraphics[ width=3.5in]{ fig_op_new.eps}
%\caption{Finding the  cluster-size that maximizes the FED lifetime. The expected CDF of individual lifetimes is depicted for different cluster-size values.  } \label{life}
%\end{figure}

%\begin{figure*}[!t]
%\centering
%\includegraphics[width=5 in]{ PMACne1.eps}
%\caption{The proposed $E^2$-MAC for LTE systems.   } \label{Lframe}
%\end{figure*}

\begin{table}[t!]
\centering \caption{Simulation Parameters   }\label{sim3}
\begin{tabular}{p{7 cm}p{8 cm}}\\
\toprule[0.5mm]
{\it Parameters }&{\it Value}\\
\midrule[0.5mm]
Cell outer and inner radius &  $500$, 50 m\\
Pathloss, $\Omega_h(d)$ & $128.1+37.6\log (\frac{d}{1000})$\\
Pathloss, $\Omega_m(d)$ &  $38.5+20\log (d)$\\
Thermal noise power & $-204$ dBW/Hz \\
Number of devices & $5000$  \\
Available resources& 180 KHz $\times$ 2.4 sec per $T_{RA}$: 240 LTE frames\\
$T_{RA}$&1000 sec\\ 
$P_c, P_t^m, P_t^h$& 20 mW, 50 mW, 200 mW\\
$E_s^h$ & 1.5 mJ per $T_{RA}$\\
\hline
{\it Traffic parameters} &{ }\\
Packet arrival of each device, $r_g$&Poisson distributed. Average: 1  per 7 hours \\
Packet size & $5$ Kbytes\\
\hline
{\it cMAC parameters} &{ }\\
Communications protocol& Reservation through config. 0 of RACH \cite{book1}, communications through PUSCH\\
Number of preambles& 54 in even frames\\
\hline
{\it Intra-cluster parameters}&{ }\\
Communications protocol& $n$-phase CSMA/CA\\
Time for intra-cluster communications of all clusters& 1.4 sec (140 frames)\\
Time for intra-cluster communications of each cluster, $T_{intra}$& $ \min\{z,200\} $ msec\\
$\theta_b,\theta_f$ &$\frac{T_{intra}}{5n}$\\ 
$\delta_d$& 1 msec\\
\hline
{\it Inter-cluster   parameters} &{ }\\
Communications protocol& Reservation through PUCCH, communications through PUSCH \cite{sched}\\
$T_{inter}$& 1 sec (1000 PRBPs)\\
%$E_s^h$&1.5 mJ per $T_{RA}$\\
\bottomrule[0.5mm]
\end{tabular}
\end{table}
%Interference management& Assigning different  frames to nearby clusters\\

%\begin{figure*}
%        \centering
%                \includegraphics[width=7in]{ 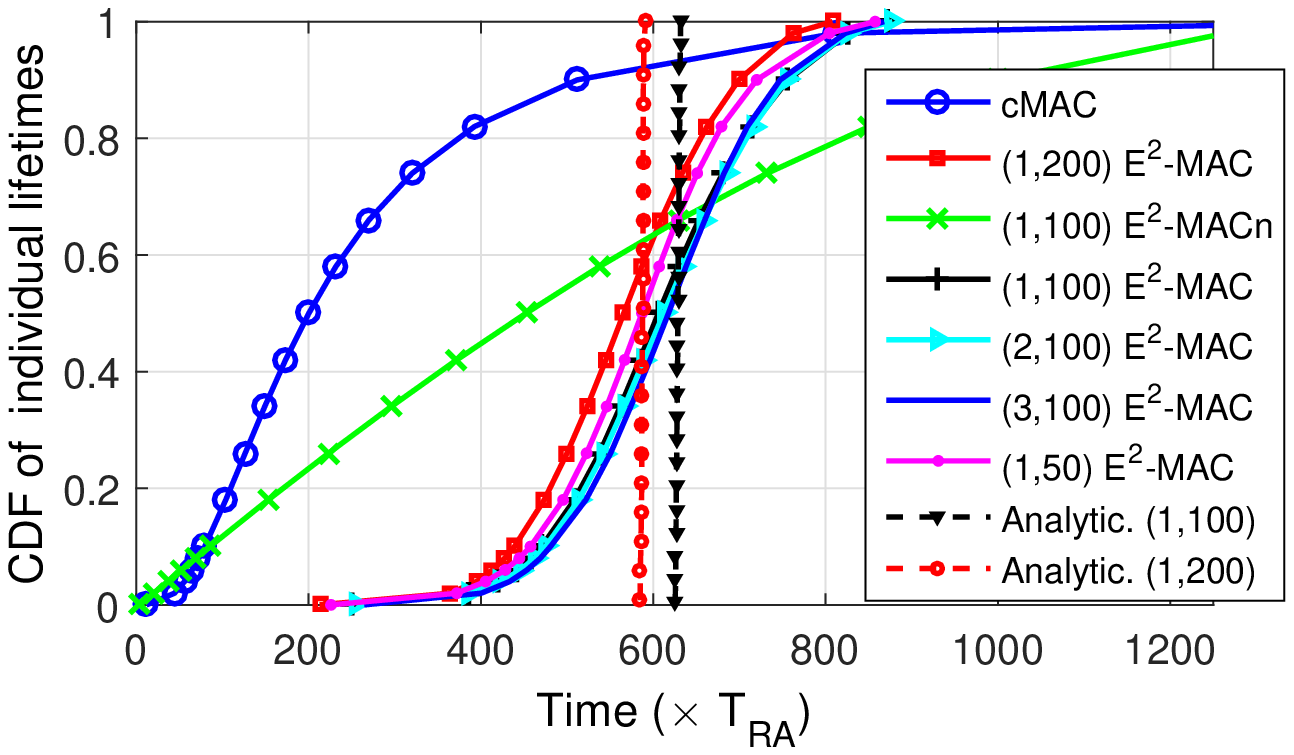}
%                \caption{CDF of individual lifetimes}
%                \label{pdf}
%        \caption{Performance comparison of different MAC protocols}\label{fig_asl}
%\end{figure*}

\section{Performance Evaluation}\label{sim}
In this section, we evaluate the system performance. To this end, the uplink transmission of 5000 machine nodes which are  randomly distributed according to a spatial Poisson point process in a single cell with one BS at the center is simulated using MATLAB. 

%The clusters utilize the allocated channel for intra-cluster communications with reuse factor 1. 

\subsection{Structure of the implemented MAC schemes}\label{e2}
The  implemented $E^2$-MAC follows the presented structure in Fig. \ref{Lframe}. In this figure, $T_{RA}$ shows the time interval between two consecutive resource allocations to the machine nodes. $E^2$-MAC benefits from the $n$-phase CSMA/CA for  communications inside the clusters. When the allocated phase for a group of CMs starts, each node which has data to transmit waits for a random  time window, which is exponentially distributed with mean $\theta_b$, and then, sends its packets.  For communications between CHs and the BS, CHs  reserve PUSCH resources in advance, e.g. using the physical uplink control channel  \cite{sched} or by persistent resource reservation \cite{reli}.  The detailed simulation parameters  can be found in Table \ref{sim3}.  From Table \ref{sim3}, one sees that the maximum number of allocated frames for intra-cluster communications of each cluster is 20, however, the total number of available frames for intra-cluster communications of all clusters is 140. Then,  the BS can allocate 7 orthogonal bunches of frames to 7 neighbor clusters in order to mitigate the inter-cluster interference.  As a benchmark,  performance of the $E^2$-MAC  is compared against a contention-based MAC (cMAC) protocol which is designed based on the configuration 0 of the  RACH of LTE \cite{book1}. In cMAC, 54 orthogonal preambles are available in the second subframes of even-numbered frames for resource reservation of machine nodes that have data to transmit. Also, data transmission of successful nodes in resource reservation at frame $i$ will be scheduled to be done in frames $i+1$ and $i+2$.  

\subsection{Analytical results}\label{anati}
To find the cluster size that maximizes the FED lifetime, we analyze the expected network lifetime for different cluster-size values. Using the proposed framework in section \ref{opsi} and the energy consumption expressions for CSMA/CA protocol in section \ref{bse}, one can rewrite the $L_c (d_h,z)$ expression in \eqref{FEDl} by inserting the following parameters:
\begin{align}
&F_{\mathcal M}(w,P_t^m,\Omega_m (\sqrt{\frac{z}{4\sigma}}),z)\simeq  \frac{p_{is} w}{r_g T_{RA}}\log(1+\frac{P_t^m}{N_0w\Gamma \Omega_m(\sqrt{\frac{z}{4\sigma}})}),\nonumber\\
&F_{\mathcal H}(w, P_t^h,\Omega_h (d_h),\frac{N_t}{z})=w\log(1+\frac{P_t^h}{N_0w\Gamma \Omega_h(d_h)}),\nonumber\\
&T_c=T_{RA}, E_s= r_g T_{RA}  P_c \theta_b, E_s^h=P_c T_{intra}+1.5 \text{mJ},\nonumber
\end{align}
where $r_g$ is the packet generation rate of each node, and $p_{is}$ has been derived in section \ref{bse}.
Fig. \ref{life} depicts the cumulative density function  of $L_c(d_h,z)$ for different  $z$ values when $d_h$ is the distance between a randomly chosen point in the cell and the BS. From this figure, one sees that $z=100$ outperforms the others and achieves the highest FED network lifetime.  Also, one sees that both having too many or too small number of clusters in the cell can degrade the network lifetime significantly. 
\begin{figure}
         \begin{subfigure}[b]{0.5\textwidth}
                \includegraphics[trim={0cm 0cm 0cm 0cm},clip,width=3.5in]{ fig_pr_new.eps}
                \caption{CDF of individual lifetimes}
                \label{pdf}
        \end{subfigure} 
        ~
         \begin{subfigure}[b]{0.5\textwidth}
                \includegraphics[trim={0cm 0cm 0cm 0cm},clip,width=3.5in]{ 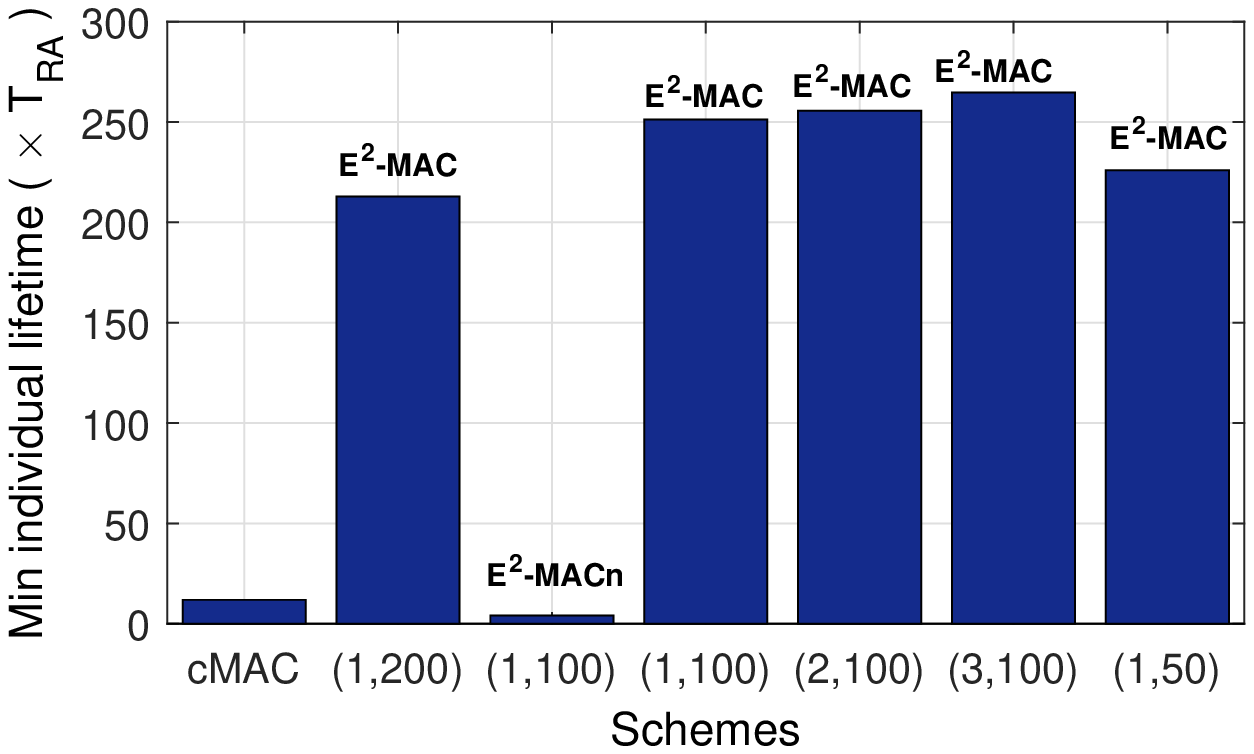}
                \caption{Detailed lifetime comparison}
                \label{pdfn}
        \end{subfigure} 
        \caption{Lifetime performance comparison of different MAC protocols}\label{fig_asl1}
\end{figure}

\begin{figure}
\centering
                \includegraphics[trim={0cm 0cm 1cm 0cm},clip,width=3.5in]{ 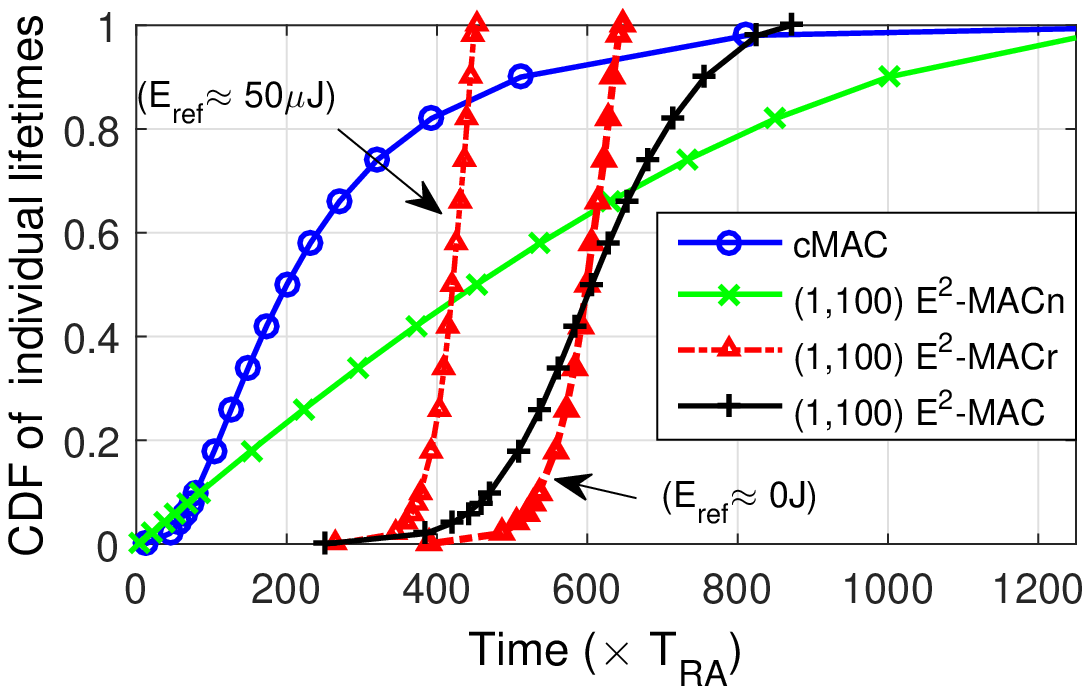}
                \caption{Lifetime performance of cluster-based MTC with cluster-reforming}
                \label{reft}
        \end{figure} 

\subsection{Simulation results}
In the following figures, ($x$,$y$)$E^2$-MAC refers to the $E^2$-MAC where $x$ is the number of phases for the $n$-phase CSMA/CA and
 $y$ is the average cluster size. Also, $E^2$-MACn refers to a version of the $E^2$-MAC in which CH reselection happens after death of each CH, i.e. the current set of CHs will remain in the CH mode until death.
 Fig. \ref{fig_asl1} compares lifetime performance of  the $E^2$-MAC with the lifetime-maximizing cluster-size, i.e. $z=100$, against the $E^2$-MAC with non-optimal cluster size and the cMAC.   First, Fig. \ref{pdf} represents the evolution of the individual battery lifetimes from the reference time at which all devices are fully charged until the last battery is depleted. One sees that using the cMAC, a great number of nodes die very early because of energy wastage in collisions and idle listening, and the remaining nodes  last for a longer time because of reduced contention for channel access. Furthermore, we see that using the $E^2$-MACn, the respective CDF curve has a mild slope because the first set of CHs drains out of energy very soon and  the last set of CHs lasts for a very long time.  Also, using  (1,100)$E^2$-MAC, where 100 is the lifetime maximizing cluster-size as derived in Fig. \ref{life}, one sees the CDF curve has a steeper slope which means almost all machine nodes  die in a limited time-window indicating replacement of their batteries can be done all at once. The semi-vertical curves in this figure  present the expected CDF of individual lifetimes as we derived from the analytical results in Fig. \ref{life}. One sees that the derived curves from the simulation results are centered on their expected values but the slopes of these curves are not as sharp as the slopes of the expected curves. In other words, we expect from the analytical results that all nodes die almost at the same time, but in simulations nodes  die in a  time-window. This difference is due to the fact that in our analytical model in \eqref{FEDl} we have assumed that all clusters have the same cluster-sizes, however, in simulations different clusters may have different numbers of CMs which can significantly impact the network lifetime. Also, our lifetime model in \eqref{FEDl} assumes that all CMs have the same lifetimes, however, in simulations the CMs will die sequentially which means the last node in a cluster will die approximately $zT_{RA}$ seconds later than the first node.    Finally, it is evident that the lifetime can be further improved by increasing the number of phases for the $n$-phase CSMA/CA, e.g. by using (3,100)$E^2$-MAC instead of (1,100)$E^2$-MAC.  
 
  The detailed FED network lifetime performance comparison of the proposed MAC schemes is presented in Fig. \ref{pdfn}.  In this figure, it is evident that the $E^2$-MACn achieves the worst FED network lifetime, because using this scheme the first set of selected CHs dies very early. On the other hand, this scheme achieves the longest individual lifetime, which makes it favorable in specific metering applications. %This scheme is of interest in applications where the correlation between gathered data by different sensors is high; hence, the longest individual lifetime is the objective function to be maximized. 
  Also, it is evident that the (3,100)$E^2$-MAC achieves the best FED network lifetime performance. %One sees that the FED lifetime of the (1,100)$E^2$-MAC is 20 times higher than the cMAC because the former  benefits from clustering to address the massive concurrent access issue, is 53 times higher than the (1,100)$E^2$-MACn because the (1,100)$E^2$-MAC  distributes load of being CH among CMs using  Algorithm 1, is 0.18 and 0.11 times higher than the (1,200) and (1,50)$E^2$-MAC respectively because the (1,100)$E^2$-MAC benefits from the lifetime maximizing cluster-size, and finally is 6$\%$ less than the  (3,100)$E^2$-MAC because the latter benefits from  the $3$-phase CSMA/CA to decrease the probability of collisions and reduce the length of the idle listening period. 

Fig. \ref{reft} evaluates lifetime performance of cluster-based M2M communications with cluster reforming. In this figure, $E^2$-MACr represents a version of $E^2$-MAC in which, after each CH re-selection machine nodes connect to the nearest CH, and hence, cluster-reforming may happen. As discussed in section \ref{refor}, cluster-reforming can prolong the network lifetime if the amount of saved energy in reforming the clusters is larger than the consumed energy per node in cluster-reforming procedure. On sees in Fig. \ref{reft} that when $E_{ref}\approx 0$, i.e. the consumed energy per device for cluster-reforming is negligible, the FED network lifetime of $E^2$-MACr is 55\% larger than the one of $E^2$-MAC. However, when $E_{ref}= 50 \mu$J, this improvement is only 5\%. Then, an efficient implementation of $E^2$-MACr can contribute in prolonging the network lifetime.
%Designing  a lifetime-aware cluster-reforming scheme by utilizing the network assistance from the BS, and deriving its optimal cluster-reforming frequency, constitute an interesting research direction in which the present work can be extended.

\begin{figure}
        \centering        
        \begin{subfigure}[b]{0.5\textwidth}
               \includegraphics[trim={0cm 0cm 0cm 0cm},clip,width=3.5in]{ 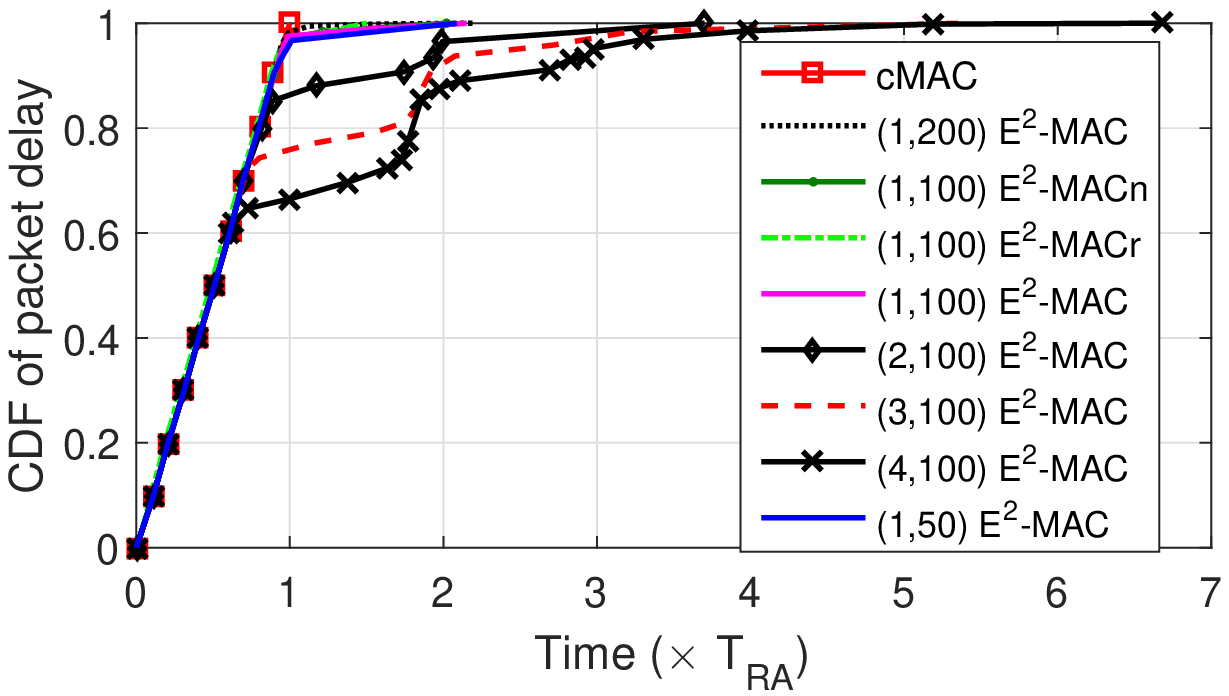}
               \caption{CDF of packet delay }
                \label{de}
        \end{subfigure}~
                \centering
         \begin{subfigure}[b]{0.5\textwidth}
                \includegraphics[trim={0cm 0cm 0cm 0cm},clip,width=3.5in]{ 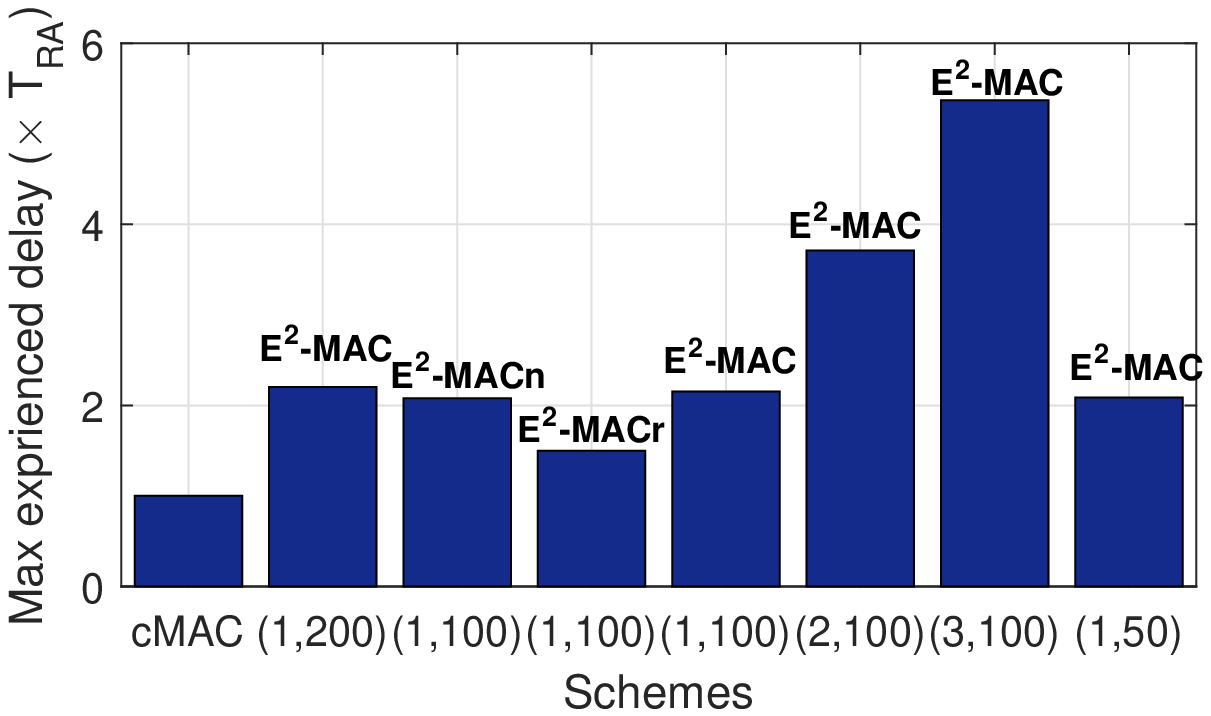}
                \caption{Detailed delay comparison}
                \label{den}
        \end{subfigure} 
        \caption{Delay performance comparison of different MAC protocols}\label{fig_asl2}
\end{figure}

 Fig. \ref{de} represents the CDF of packet delay for different MAC schemes. One sees that using $n$-phase CSMA/CA,  packet delay increases in the number of phases.   The detailed delay performance comparison  is presented in Fig. \ref{den}.  In this figure, we see that the maximum experienced delay by (n,100) $E^2$-MAC is  approximately $0.7 n$ higher than the (1,100) $E^2$-MAC scheme.  By comparing Fig. \ref{pdfn} and Fig. \ref{den}, one sees that the $n$-phase CSMA/CA offers a tunable tradeoff between energy efficiency and packet delay, because  both  lifetime and packet delay increase in the number of phases. Also, one sees that the maximum experienced delay in (1,100) $E^2$-MACr scheme is less than the one of (1,100) $E^2$-MAC. This is due to the fact that the average communications distance in the latter is shorter than the former, as discussed in section \ref{refor}.

From the lifetime and delay analyses in Fig. \ref{pdf}-Fig. \ref{fig_asl2}, one sees that the (1, $z^*$) $E^2$-MAC can significantly improve the FED network lifetime. Also, we see that further lifetime improvement is achievable at the cost of sacrificing the delay performance by utilizing the ($n, z^*$) $E^2$-MAC scheme, where $n>1$. Then, for M2M networks in which the performance/coverage is affected by losing some nodes, ($n, z$) $E^2$-MAC can be used, in which $n$ and $z$ are tuned based on the system parameters, delay budget, and available time/frequency resources. Furthermore, for M2M networks in which the correlation between gathered data by different nodes is high, and hence, the longest individual lifetime is defined as the network lifetime, the $E^2$-MACn achieves the best lifetime performance.

\section{Conclusion}
In this paper, we have proposed $E^2$-MAC to maximize network battery lifetime in massive M2M networks. Theoretical analyses are provided on the impact
of clustering, cluster size, and cluster-head selection 
on both individual lifetime of machine nodes and network lifetime. 
It is shown that there is a cluster size which maximizes the network lifetime and this cluster size is formulated as a function of system parameters. To further prolong the network lifetime, a decentralized cluster-head (re-)selection scheme  is also presented. Furthermore, by investigating the feasibility of clustering in different regions of the cell it is shown that clustering may not be a lifetime-aware scheme in some regions. Then, a general condition which must be satisfied by any feasible region is derived. Finally, a tunable delay-energy tradeoff for intra-cluster communications is  obtained by devising  an energy-efficient $n$-phase CSMA/CA scheme which can be tuned to provide a close-to-zero energy wastage  for cluster members. 

%In conclusion, we mention directions for extension of this work. Due to the diverse characteristics of M2M networks, besides FED network lifetime one may define other network lifetime definitions and design the network such that maximize its lifetime of interest. To enable energy-efficient M2M communications over current LTE-A networks, we have assumed that the inter-cluster communications follow existing cellular standards. The joint design of intra- and inter-cluster communications is an interesting area for future studies.

\section*{Acknowledgment}
The authors would like to thank X. Chen and P. Zhang for helpful investigation of feasibility of the project.

\ifCLASSOPTIONcaptionsoff
  \newpage
\fi

\bibliographystyle{IEEEtran}
\bibliography{bibl}

\end{document}